\newcommand{\vol}{\mbox{Vol}}
\newcommand{\sS}{\mathcal{S}} 
\newcommand{\I}{\mathscr{I}} 
\newcommand{\rR}{\mathscr{R}} 
\newcommand{\rr}{\mathsf{r}} 
\newcommand{\Q}{\mathcal{Q}}
\newcommand{\U}{\mathcal{U}}
\newcommand{\R}{\mathbb{R}} 
\newcommand{\hR}{\boldsymbol{R}} 
\newcommand{\hRb}{\boldsymbol{R}_{{\mbox{\scriptsize bad}}}} 
\newcommand{\D}{\Delta}
\newcommand{\vi}{\mathbf{i}}
\newcommand{\vj}{\mathbf{j}}
\newcommand{\vs}{\mathbf{s}} 
\newcommand{\vw}{\mathbf{w}} 
\newcommand{\vv}{\mathbf{v}} 
\newcommand{\ba}{\mathbf{a}} 
\newcommand{\bb}{\mathbf{b}} 
\newcommand{\bA}{\mathbf{A}} 
\newcommand{\bB}{\mathbf{B}} 
\newcommand{\vdelta}{\bm{\delta}}
\newcommand{\vtau}{\bm{\tau}}
\newcommand{\vxi}{\bm{\xi}}
\newcommand{\res}{\textrm{Res}}
\newcommand{\syl}{\textrm{Syl}}
\newcommand{\tinyo}{\scriptscriptstyle o}
\newcommand{\domega}{\overset{\tinyo}{\Omega}}
\newcommand{\dtheta}{\overset{\tinyo}{\Theta}}
\newcommand{\dO}{\overset{\tinyo}{O}}
\newcommand{\ignore}[1]{}
\title{Lower Bounds for Intersection Reporting among Flat Objects}
\author{Peyman Afshani}{Aarhus University, Aarhus, Denmark}{peyman@cs.au.dk}{}{}
\author{Pingan Cheng}{Aarhus University, Aarhus, Denmark}{pingancheng@cs.au.dk}{}{}
\authorrunning{P. Afshani and P. Cheng} 
\keywords{Computational Geometry, Intersection Searching, Data Structure Lower Bounds} 
\begin{document}

\maketitle

\medskip

\begin{abstract}

Recently, Ezra and Sharir~\cite{es22i} showed an 
$O(n^{3/2+\sigma})$ space and $O(n^{1/2+\sigma})$ query time
data structure for ray shooting among triangles in $\R^3$.
This improves the upper bound given by the classical $S(n)Q(n)^4=O(n^{4+\sigma})$
space-time tradeoff for the first time in almost 25 years
and in fact lies on the tradeoff curve of $S(n)Q(n)^3=O(n^{3+\sigma})$.
However, it seems difficult to apply their techniques beyond
this specific space and time combination.
This pheonomenon appears persistently in almost all recent advances of flat object intersection searching,
e.g., line-tetrahedron intersection in $\R^4$~\cite{es22ii}, triangle-triangle intersection in $\R^4$~\cite{es22ii},
or even among flat semialgebraic objects~\cite{aaeks22}.

We give a timely explanation to this phenomenon from a lower bound perspective.
We prove that given a set $\sS$ of $(d-1)$-dimensional simplicies in $\R^d$, 
any data structure that can report all intersections 
with small ($n^{o(1)}$) query time
must use $\Omega(n^{2(d-1)-o(1)})$ space.
This dashes the hope of any significant improvement to the tradeoff curves for small query time
and almost matches the classical upper bound.
We also obtain an almost matching space lower bound of $\Omega(n^{6-o(1)})$ 
for triangle-triangle intersection reporting  in $\R^4$ when the query time is small.
Along the way, we further develop the previous lower bound techniques by Afshani and Cheng~\cite{ac21,ac22}.

\end{abstract}


\section{Introduction}

	Given a set $\sS$ of triangles in $\R^3$,
	how to preprocess $\sS$ such that given any query ray $\gamma$,
	we can efficiently determine the first triangle intersecting $\gamma$
	or report no such triangle exists?
	This problem, known as ray shooting, is one of the most important
	problems in computational geometry with countless papers published
	over the last three decades~\cite{p90,am93,ms93,p93,dhosv94,as96,r99,ss05,adg08,dg08,es22i,aaeks22}.
	For a comprehensive overview of this problem, 
	we refer the readers to an excellent recent survey~\cite{p17}.
	
  	Recently, there have been considerable and significant advances on ray shooting 
	and a number of problems related to intersection searching on the upper bound side.
  	We complement these attempts by giving a lower bound for a number of intersection searching problems; 
  	these also settle a recent open question asked by Ezra and Sharir~\cite{es22i}.

\subsection{Background and Previous Results}
In  geometric intersection searching, the input is a set $\sS$ of geometric objects and the goal is to preprocess
$\sS$ into a data structure such that given a geometric object $\gamma$ at the query time, one can find all the objects
in $\sS$ that intersect $\gamma$. 
In the reporting variant of such a query, the output should be the list of all the intersecting objects in $\sS$. 
Intersection searching is a generalization of range searching, a fundamental and core area of computational geometry~\cite{a17}.  
This captures many natural classic problems e.g., simplex range reporting
where the inputs are points ($0$-flats) and the queries are simplices (subsets of $d$-flats),
ray shooting reporting among triangles in $\R^3$ 
where the inputs are triangles (subsets of $2$-flats) and the queries are rays (subsets of $1$-flats) and so on.
See~\cite{a17,p17} for more information.

Without going too much in-depth, it suffices to say that by now, the simplex range searching problem is more or less well-understood. 
There are classical solutions that offer the space and query time trade-off
of $S(n)Q^d(n) = \tilde O(n^d)$ where $S(n)$ and $Q(n)$ are the space and query time of the data structure~\cite{Chazelle.cutting,matouvsek1993range,c12} and there are a number of almost matching lower bounds 
that show these are essentially tight~\cite{a12,c89,cr96}.

However, intersection searching in higher dimensions is less well-understood. 
The classical technique is to lift the problem to the parametric space of the input or the query,
reducing the problem to semialgebraic range searching, 
a generalized version of simplex range searching, where queries are semialgebraic sets of constant description complexity. 
In mid-1990s, semialgebraic range searching could only be solved efficiently in four and lower dimensions
by classical tools developed for simplex range searching~\cite{am94}, resulting in a
space-time trade-off bound of $S(n)Q(n)^4=O(n^{4+\sigma})$
for line-triangle intersection searching in $\R^3$, where $\sigma>0$ can be any small constant.

Recently, using polynomial techniques~\cite{gk15,g15},
several major advances have been made on semialgebraic range reporting.
For example, near optimal small linear space and fast query data structure were developed~\cite{ams13,mp15,aaez21}.
These almost match the newly discovered lower bound bounds~\cite{ac21,ac22}.
However, these polynomial techniques also have led to significant advances in intersection searching. 
For ray-triangle intersection reporting in $\R^3$,
Ezra and Sharir~\cite{es22i} showed that using algebraic techniques,
it is possible to build a data structure of space $S(n)=O(n^{3/2+\sigma})$
and query time $Q(n)=O(n^{1/2+\sigma})$ for ray shooting among triangles. 
The significance of this result is that it improves the upper bound given by the trade-off curve of $S(n)Q(n)^4=O(n^{4+\sigma})$ 
for the first time in almost 25 years
and in fact it lies on the trade-off curve of $S(n)Q(n)^3=O(n^{3+\sigma})$. 
This leads to the following very interesting question asked by Ezra and Sharir.
To quote them directly:
{\em
  ``There are several open questions that our work raises. First, can we improve our trade-off
  for all values of storage, beyond the special values of $O(n^{3/2+\varepsilon})$ storage and $O(n^{1/2+\varepsilon} )$ query
  time? Ideally, can we obtain query time of $O(n^{1+\varepsilon} /s^{1/3} )$, with $s$ storage, as in the case of
ray shooting amid planes? Alternatively, can one establish a lower-bound argument that
shows the limitations of our technique?''
}

Inspired by~\cite{es22i},
additional results for flat intersection searching 
were discovered during the last two years,
e.g., triangle-triangle intersection searching in $\R^4$~\cite{es22ii},
line-tetrahedron intersection searching in $\R^4$~\cite{es22ii},
curve-disk intersection searching in $\R^3$~\cite{aaeks22},
and even more general semialgebraic flat intersection searching~\cite{aaeks22}.
Similar to the result in~\cite{es22i},
the improved results are only observed for a special space-time combination
and the improvement to the entire trade-off curve is limited.
This once again raises the question of whether it is possible to obtain the
trade-off curve of $S(n)Q(n)^{d}=O(n^{d+\sigma})$ for intersection searching 
in $\R^d$. 

\subsection{Our Results}
We give a negative answer to this question. 
We show that answering intersection searching queries in polylogarithmic time when the 
queries are lines in $\R^d$ and input objects are subsets of $(d-1)$-flats (that we call hyperslabs)
requires $\domega(n^{2(d-1)})$ space\footnote{In this paper, $\domega(\cdot),\dtheta(\cdot),\dO(\cdot)$ hides $n^{o(1)}$ factors;
$\tilde\Omega(\cdot),\tilde\Theta(\cdot),\tilde O(\cdot)$ hides $\log^{O(1)}n$ factors.}.
Our lower bound in fact applies to ``thin'' $(d-1)$-dimensional slabs (e.g., in 3D,
  that would be the intersection of the region between two parallel hyperplanes with another
hyperplane). 
This almost matches the current upper bound for the problem
and shows that the improvement in~\cite{es22i}
cannot significantly improve the trade-off curve when the query time is small.
To be specific, we obtain a lower bound of
\begin{align*}
  S(n) =     \domega\left( \frac{n^{2(d-1)}}{Q(n)^{4(3d-1)(d-1)-1}}\right)
\end{align*}
for line-hyperslab intersection reporting in $\R^d$ and a lower bound of
\begin{align*}
  S(n) =     \domega\left( \frac{n^{6}}{Q(n)^{125}}\right)
\end{align*}
for triangle-triangle intersection reporting in $\R^4$. Here, $S(n)$ and $Q(n)$ are the space
and query time of the data structure. 
Similar to the other semialgebraic range reporting lower bounds~\cite{ac21,ac22}, these lower bounds
have a much larger exponent on $Q(n)$ than on $n$ which does allow for substantial improvements
when $Q(n)$ is no longer too small; we have not opted for optimizing the exponent of $Q(n)$ in our bounds 
and using tighter arguments, these exponents can be improved but they cannot match the exponent of $n$.

We believe our results are timely as flat intersection searching is a hotly investigated field recently,
and as mentioned, with many open questions that need to be answered from a lower bound point of view. 

\subsection{Technical Contributions}
From a technical point of view,  our results require going beyond the previous attempts~\cite{ac21,ac22}.
To elaborate, the previous general technique assumed a particular form for the polynomials involved in defining the query
semialgebraic ranges, 
namely, of the form $X_1 = X_2^{\Delta} + P(X_1, \cdots, X_d)$ where 
the coefficients of $P$ had to be independent and thus could be set arbitrarily small. 
Unfortunately, the problems in intersection searching cannot fit this framework and there seems to be no easy fix for the following reason.
The previous technique relies heavily on the fact that if the coefficients of $P$ is small enough, then one can approximate $X_1$ with $X_2^{\Delta}$ and for
the technique to work both conditions must hold (i.e., small coefficients for $P$ and having degree $\Delta$ on $X_2$). 

Generally speaking, the previous techniques do not say anything about problems in which the polynomials involved have a specific form;
the only exception is the lower bound for annuli~\cite{ac21} where specific approaches had to be created that could only be applied to the
specific algebraic form of circles. 

The issue is very prominent in intersection searching where 
we are dealing with polynomials where the coefficients of the monomials are no longer
independent and the polynomials involved have specific forms; for instance, the coefficient of $X_2^{\Delta}$ is zero. 
We introduce  techniques that allows us circumvent these limitations and obtain lower bounds for some broader
class of problems that involve polynomials with some specific forms. 


\section{Preliminaries}

\subsection{The Geometric Range Reporting Lower Bound Framework in the Pointer Machine}
	
We use the pointer machine lower bound framework that was also used in the latest proofs~\cite{ac22}.
This is a streamlined version of the one originally proposed by 
	Chazelle~\cite{c90r} and Chazelle and Rosenberg~\cite{cr96}.
In the pointer machine model,
the memory is represented as a directed graph 
where each node stores one point as well as two pointers
pointing to two other nodes in the graph.
Given a query, the algorithms starts from a special ``root'' node, 
and then explores a subgraph which contains
all the input points to report.
The size of the directed graph is then a lower bound
for the space usage and then minimum subgraph
needed to explore to answer any query is a lower bound
for the query time.

	Intuitively, to answer a range reporting query efficiently,
	we need to store the output points to the query close to each other.
	If the answer to any query contains many points 
	and two queries share very few points in common,
	many points must be stored multiple times,
	leading to a big space usage.

  The streamlined version of the framework is the following~\cite{ac22}.

	\begin{restatable}{theorem}{rrfw}\label{thm:rrfw}
		Suppose a $d$-dimensional geometric range reporting problem
		admits an $S(n)$ space and $Q(n)+O(k)$ query time data structure,
		where $n$ is the input size and $k$ is the output size.
		Let $\vol(\cdot)$ denote the $d$-dimensional Lebesgue measure. 
		Assume we can find $m=n^{c}$, for a positive constant $c$, 
		ranges $\rR_1, \rR_2, \cdots, \rR_m$ in a
		$d$-dimensional hyperrectangle $\hR$ such that
		\begin{enumerate}
        	\item $\forall i=1,2,\cdots,m, \vol(\rR_i\cap\hR)\ge 4c\vol(\hR) Q(n)/n$;~\label{cond:cond1}
        	\item $\vol(\rR_i\cap\rR_j)=O(\vol(\hR)/(n2^{\sqrt{\log n}}))$ for all $i\neq j$~\label{cond:cond2}.
        \end{enumerate}
		Then, we have $S(n)=\domega(mQ(n))$.
	\end{restatable}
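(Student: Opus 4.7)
The plan is to sample a random input of $n$ points inside $\hR$ and invoke the classical pointer-machine lower bound of Chazelle and Rosenberg~\cite{cr96}, in the streamlined form of~\cite{ac22}.

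First I would draw $n$ points $P$ independently and uniformly from $\hR$. By linearity of expectation, condition~(\ref{cond:cond1}) gives $E[|P\cap \rR_i|] \geq 4c\,Q(n)$, while condition~(\ref{cond:cond2}) gives $E[|P\cap \rR_i \cap \rR_j|] = O(1/2^{\sqrt{\log n}})$ for $i\neq j$. Applying a Chernoff bound to each $|P\cap \rR_i|$ and union-bounding over the $m=n^c$ ranges --- where the constant $4c$ in condition~(\ref{cond:cond1}) is chosen precisely to absorb the $\log m = c\log n$ failure-probability cost --- yields, with positive probability, a realization $P$ such that $k_i := |P \cap \rR_i| \geq 2c\,Q(n)$ for every $i$; while Markov's inequality applied to $\sum_{i<j} |P\cap\rR_i\cap\rR_j|$ ensures that nearly all pairs of ranges share no sampled point at all (the expected sum is $O(m^2/2^{\sqrt{\log n}})$). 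Fix this $P$ as the input to the hypothetical data structure.

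Next I would apply the pointer-machine charging. On query $\rR_i$, the algorithm traces out a connected subgraph $G_i$ of the memory graph with $|G_i| \leq Q(n) + O(k_i) = O(k_i)$ that contains the output $A_i := P \cap \rR_i$. Summing over queries gives $\sum_i |G_i| \geq \sum_i k_i = \Omega(mQ(n))$. The central task is then to bound the maximum multiplicity $\lambda := \max_v |\{i : v \in G_i\}|$ by $n^{o(1)}$, from which
\[
   S(n) \;\geq\; \Bigl|\bigcup\nolimits_i G_i\Bigr| \;\geq\; \frac{\sum_i |G_i|}{\lambda} \;=\; \domega(m\,Q(n)).
\]

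The main obstacle is precisely this multiplicity bound. While pairwise disjointness of the \emph{outputs} $A_i\cap A_j$ follows directly from condition~(\ref{cond:cond2}), the subgraphs $G_i$ also contain auxiliary (non-output) memory cells whose overlaps are not immediately constrained by the geometry. The Chazelle--Rosenberg argument circumvents this by exploiting the bounded out-degree of the pointer graph: a connected subgraph of size $O(k_i)$ is essentially determined by the locations of its output nodes up to an $n^{o(1)}$ blow-up, so if an auxiliary node lay in too many $G_i$'s it would force the corresponding outputs to pairwise share common ancestors along the exploration paths, contradicting the pairwise sharing bound. The super-polylogarithmic slack $2^{\sqrt{\log n}}$ in condition~(\ref{cond:cond2}) is exactly what makes this propagation robust against the $n^{o(1)}$ losses. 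I would invoke the final charging lemma from~\cite{ac22} as a black box to conclude.
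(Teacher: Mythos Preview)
The paper does not give its own proof of this theorem; it is imported directly from~\cite{ac22} (a streamlining of Chazelle~\cite{c90r} and Chazelle--Rosenberg~\cite{cr96}) and used as a black box throughout. So there is no in-paper argument to compare against.

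Your plan---sample $n$ points uniformly from $\hR$ so that the two volume hypotheses become the point-count hypotheses of the classical pointer-machine framework, then invoke that framework---is the standard route and is what~\cite{ac22} does. Your reading of the constant $4c$ in Condition~\ref{cond:cond1} (calibrated so that the Chernoff tail beats a union bound over $m=n^c$ events) and of the $2^{\sqrt{\log n}}$ slack in Condition~\ref{cond:cond2} (there to absorb sub-polynomial losses in the charging step) is correct.

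One concrete issue in your write-up: the quantity $\lambda := \max_v |\{i : v\in G_i\}|$ you define is always equal to $m$, because the root of the pointer machine lies in every $G_i$. Hence the displayed chain $S(n)\ge \bigl|\bigcup_i G_i\bigr|\ge \sum_i|G_i|/\lambda$ is vacuous as written. The Chazelle--Rosenberg mechanism does not proceed via a per-cell multiplicity bound; the argument instead works at the granularity of connected \emph{blocks} of $\Theta(Q(n))$ cells, showing that a single block cannot contain $\Omega(Q(n))$ output points for more than $n^{o(1)}$ distinct queries without forcing two of them to share too many outputs inside that block. You do correctly flag this as the ``main obstacle'' and ultimately defer to the charging lemma of~\cite{ac22}, which is exactly what the present paper does too---so the deferral is appropriate, but the intermediate $\lambda$ inequality should be dropped or reformulated at the block level.
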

	
\subsection{Notations and Definitions for Polynomials}
	In this paper, we only consider polynomials on the reals. 
	Let $P(X_1, \cdots, X_d)$ be a polynomial on $d$ indeterminates of degree $\Delta$.
	Sometimes we will use the notation $X$ to denote the set of $d$ interminates $X_1, \cdots, X_d$ and so we can
	write $P$ as $P(X)$.
	We denote by $I_{d,\D}$ a set of $d$-tuples of non-negative integers $(i_1, \cdots, i_d)$ whose sum is
	at most $\Delta$. We might omit the subscripts $d$ and $\Delta$ if they are clear from the context. 
	For an $\vi \in I$, we use the notation $X^\vi$ to represent the monomial $\Pi_{j=1}^d X_j^{i_j}$ where 
	$\vi=(i_1, \cdots, i_d)$. 
	Thus, given real coefficients $A_\vi$, for $\vi \in I$, we can write $P$ as $\sum_{\vi\in I}A_\vi X^\vi$.
	
\subsection{Geometric Lemmas}
	We introduce and generalize some geometric lemmas
	about the intersection of polynomials used in~\cite{ac21}.
	We first generalize the core Lemma in~\cite{ac21} for univariate polynomials,
	using a proof similar to~\cite{ac22}.
    	We refer the readers to Appendix~\ref{sec:proof-gen1d} for details.
	
	 \begin{restatable}{lemma}{genbase}\label{lem:gen1d}
		Let $P(x)=\sum_{i=0}^\D a_ix^i$ and 
		$Q(x)=\sum_{i=0}^\D b_ix^i$ be two univariate (constant) degree-$\D$
		polynomials in $\R[x]$ and $|a_i-b_i|\ge\eta$ for some $0\le i \le \D$.
		
		Suppose there is an interval $\I$ of $x$ such that for every
        	$x_0 \in \I$ we have $|P(x_0)-Q(x_0)| \le w$,
		then the length of  $\I$ is upper bounded by $O((w/\eta)^{1/\U})$, 
        	where $\U=\binom{\D+1}{2}$ and the $O(\cdot)$ notation
       	 hides constant factors that depend on $\Delta$. 
	\end{restatable}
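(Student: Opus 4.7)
The plan is to work with the difference polynomial $R(x) := P(x) - Q(x) = \sum_{i=0}^{\Delta} c_i x^i$, where $c_i := a_i - b_i$, so by hypothesis some index $i_0 \in \{0, 1, \ldots, \Delta\}$ satisfies $|c_{i_0}| \ge \eta$, while $|R(x)| \le w$ throughout $\I$. Writing $L$ for the length of $\I$, I would pick $\Delta+1$ equally spaced sample points $x_0 < x_1 < \cdots < x_\Delta$ in $\I$ (spacing $L/\Delta$), so that the values $y_k := R(x_k)$ satisfy $|y_k| \le w$, and the coefficient vector solves the Vandermonde system $V \vec{c} = \vec{y}$ with $V_{kj} = x_k^j$.

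The heart of the proof is to apply Cramer's rule: $c_{i_0} = \det(V^{(i_0)})/\det(V)$, where $V^{(i_0)}$ is $V$ with column $i_0$ replaced by $\vec{y}$. The denominator is the classical Vandermonde determinant $\det(V) = \prod_{k<l}(x_l - x_k)$, a product of exactly $\binom{\Delta+1}{2} = \U$ factors, each equal to $(l-k)L/\Delta$ for equally-spaced nodes, so $|\det(V)| = \Theta_{\Delta}(L^{\U})$. This is precisely where the exponent $\U$ in the statement comes from.

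Next I would bound the numerator. Expanding $\det(V^{(i_0)})$ along the replaced column, each cofactor is a $\Delta \times \Delta$ determinant whose entries are powers $x_\ell^j$ for $j \in \{0, \ldots, \Delta\} \setminus \{i_0\}$. Working under the (harmless) assumption that $\I$ lies in a bounded ambient range, as in the lower-bound applications where the construction is inside a unit box, each such cofactor is $O_{\Delta}(1)$. Combined with $|y_k| \le w$ this gives $|\det(V^{(i_0)})| = O_{\Delta}(w)$, hence $|c_{i_0}| = O_{\Delta}(w/L^{\U})$; substituting $|c_{i_0}| \ge \eta$ and rearranging yields $L = O_{\Delta}((w/\eta)^{1/\U})$ as desired.

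The main obstacle I anticipate is uniformity across the index $i_0$: the cofactor structure formally depends on which column is removed, so one must verify that the cofactor bound is the same, up to $\Delta$-dependent constants, for every $i_0 \in \{0, \ldots, \Delta\}$, not only for the leading coefficient. Using equally-spaced interpolation nodes (rather than, say, Chebyshev nodes) is exactly what produces the single uniform exponent $\U$ in the denominator of Cramer's rule, at the cost of the weaker rate $1/\U$. This is consistent with the stated exponent $1/\binom{\Delta+1}{2}$ being looser than the $1/\Delta$ one could obtain by targeting only the leading coefficient via the finite-difference identity $\Delta_h^{\Delta} R \equiv \Delta!\, h^{\Delta} c_\Delta$; the trade-off of a worse exponent for an argument that works uniformly in $i_0$ is precisely what makes this a single clean lemma.
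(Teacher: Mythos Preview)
Your proposal is correct and follows essentially the same approach as the paper: both pick $\Delta+1$ sample points in $\I$, set up the Vandermonde system, use that $|\det V| = \Theta_\Delta(L^{\U})$, and bound the coefficients via Cramer's rule. The only cosmetic difference is that the paper phrases this as a ``tweaking'' argument by contradiction (adjust $P$ by $\delta_i$ so that the tweaked polynomial agrees with $Q$ at the sample points, forcing $\delta_i = b_i - a_i$), whereas you work directly with the difference polynomial $R = P - Q$; the resulting linear system and determinant bounds are identical, and your explicit mention of the bounded-ambient-range assumption for the cofactor estimate is something the paper leaves implicit.
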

	
	Using Lemma~\ref{lem:gen1d},
	we can show the following; See Appendix~\ref{sec:proof-slicing} for details.
	
    	\begin{restatable}{lemma}{slicing}\label{lem:slicing}
		Let $P_1(X)=\sum_{\vi\in I_{d,\D}} A_\vi X^\vi$ and 
		$P_2(X)=\sum_{\vi\in I_{d,\D}} B_\vi X^\vi$ be two $d$-variate degree-$\D$
		polynomials in $\R[X]$ and $|A_{\vi}-B_{\vi}|\ge \eta_d$ for some $\vi\in I_{d,\D}$.
		
		Suppose for each assignment $X_d\in\I_d$ to $P_1, P_2$,
		where $\I_d$ is an interval for $X_d$,
		all the coefficients of the resulting $(d-1)$-variate polynomial 
		$Q_1(X_1,\cdots X_{d-1})$ and $Q_2(X_1,\cdots X_{d-1})$
		differ by at most $\eta_{d-1}$,
		then $|\I_d| = O((\eta_{d-1}/\eta_d)^{1/\U})$.
	\end{restatable}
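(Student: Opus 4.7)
The plan is to reduce the multivariate statement to the univariate Lemma~\ref{lem:gen1d} by grouping terms according to the exponent of $X_d$. First, I would observe that for a fixed value $x_d\in\R$, the substitution $X_d=x_d$ rewrites $P_1$ as
\[
 Q_1(X_1,\ldots,X_{d-1}) \;=\; \sum_{\vj\in I_{d-1,\D}} \alpha_\vj(x_d)\, X^\vj, \qquad \alpha_\vj(x_d) := \sum_{k=0}^{\D-|\vj|} A_{(\vj,k)}\, x_d^k,
\]
and analogously $Q_2=\sum_\vj \beta_\vj(x_d)\, X^\vj$ with $\beta_\vj(x_d)=\sum_{k=0}^{\D-|\vj|} B_{(\vj,k)}\, x_d^k$. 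Each $\alpha_\vj$ and $\beta_\vj$ is a univariate polynomial in $x_d$ of degree at most $\D$, putting each pair $(\alpha_\vj,\beta_\vj)$ into the setting of Lemma~\ref{lem:gen1d}.

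Next, I would identify an index $\vi=(i_1,\ldots,i_d)$ with $|A_\vi-B_\vi|\ge\eta_d$ (guaranteed by hypothesis) and set $\vj^*=(i_1,\ldots,i_{d-1})$. The coefficient of $x_d^{i_d}$ in $\alpha_{\vj^*}(x_d)-\beta_{\vj^*}(x_d)$ is exactly $A_\vi-B_\vi$, so the two univariate polynomials $\alpha_{\vj^*}$ and $\beta_{\vj^*}$ have a coefficient position in which they differ by at least $\eta_d$. On the other hand, the hypothesis states that for every $x_d\in\I_d$, every coefficient of $Q_1-Q_2$ is bounded in absolute value by $\eta_{d-1}$; specializing to the multi-index $\vj^*$ gives $|\alpha_{\vj^*}(x_d)-\beta_{\vj^*}(x_d)|\le\eta_{d-1}$ throughout $\I_d$.

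I would then invoke Lemma~\ref{lem:gen1d} on $\alpha_{\vj^*}$ and $\beta_{\vj^*}$ with $\eta=\eta_d$ and $w=\eta_{d-1}$ to conclude $|\I_d|=O((\eta_{d-1}/\eta_d)^{1/\U})$ with $\U=\binom{\D+1}{2}$. I do not anticipate any significant obstacle: the only minor point to check is that $\alpha_{\vj^*}$ and $\beta_{\vj^*}$ have degree at most $\D-|\vj^*|\le\D$, so the stated $\U=\binom{\D+1}{2}$ is a valid (possibly loose) upper bound for the univariate invocation. In essence, this is a clean ``slicing'' argument: isolate one coefficient direction in the last variable and apply the univariate bound there, and the multivariate combinatorics of $I_{d,\D}$ enters only through the choice of which coefficient to track.
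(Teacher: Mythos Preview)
Your proposal is correct and follows essentially the same approach as the paper: both group terms by the $(d-1)$-variate monomial so that each coefficient becomes a univariate polynomial in $X_d$, locate one such coefficient pair with a gap of at least $\eta_d$, and invoke Lemma~\ref{lem:gen1d}. Your version is in fact slightly cleaner, since you directly pick $\vj^*$ from the hypothesized index $\vi$, whereas the paper phrases the last step via an (unnecessary) union bound over all monomials.
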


	We can use Lemma~\ref{lem:slicing} $d-2$ times, and obtain 
	the following corollary.

	\begin{restatable}{corollary}{almostfullslicing}\label{cor:almostfullshlicing}
		Let $P_1(X)=\sum_{\vi\in I_{d,\D}} A_\vi X^\vi$ and 
		$P_2(X)=\sum_{\vi\in I_{d,\D}} B_\vi X^\vi$ be two $d$-variate degree-$\D$
		polynomials in $\R[X]$ and $|A_{\vi}-B_{\vi}|\ge \eta_d$ for some $\vi\in I_{d,\D}$
		for $d\ge 3$.
		
		Suppose for each assignment $X_i\in\I_i$ to $P_1, P_2$,
		where $\I_i$ is an interval for $X_i$, for $i=3,4,\cdots,d$,
		all the coefficients of the resulting bivariate polynomial 
		$Q_1(X_1,X_2)$ and $Q_2(X_1,X_2)$
		differ by at most $\eta_{2}$,
		then $|\I_i| = O((\eta_{i-1}/\eta_i)^{1/\U})$
		for all $i=3,4,\cdots,d$.
	\end{restatable}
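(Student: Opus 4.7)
The plan is to apply Lemma~\ref{lem:slicing} iteratively, going from the $d$-variate case down to the trivariate case, and to define the intermediate values $\eta_{d-1}, \eta_{d-2}, \ldots, \eta_3$ (which the statement leaves implicit) as suprema of coefficient differences at the corresponding dimension.

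Concretely, for $i = 2, 3, \ldots, d$, I would set
\begin{equation*}
\eta_i := \sup_{(x_{i+1}^*, \ldots, x_d^*)\, \in\, \I_{i+1} \times \cdots \times \I_d} \; \max_{\vj \in I_{i,\D}} \bigl|A_\vj^{(i)} - B_\vj^{(i)}\bigr|,
\end{equation*}
where $A_\vj^{(i)}, B_\vj^{(i)}$ are the coefficients of the $i$-variate polynomials obtained from $P_1, P_2$ by substituting $X_{i+1} = x_{i+1}^*, \ldots, X_d = x_d^*$. This definition is consistent with the hypotheses of the corollary: the $\eta_2$ so defined is bounded above by the $\eta_2$ in the hypothesis, and the original $\eta_d$ is a lower bound for this $\eta_d$ (take the empty substitution).

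I would then induct on $d$. The base case $d = 3$ reduces to a single application of Lemma~\ref{lem:slicing} with $\eta_3$ and $\eta_2$. For the inductive step, to bound $|\I_d|$, note that $P_1, P_2$ differ in some coefficient by at least $\eta_d$, while for every $X_d \in \I_d$ the resulting $(d-1)$-variate polynomials differ by at most $\eta_{d-1}$ in every coefficient -- exactly the hypothesis of Lemma~\ref{lem:slicing}, yielding $|\I_d| = O((\eta_{d-1}/\eta_d)^{1/\U})$. To obtain the remaining bounds, I would pick a specific $x_d^* \in \I_d$ whose restriction realizes the supremum defining $\eta_{d-1}$ up to arbitrarily small slack; the restricted $(d-1)$-variate polynomials still satisfy the bivariate hypothesis (inherited from $P_1, P_2$) and have some coefficient difference at least $\eta_{d-1}$ minus the slack, so the inductive hypothesis applied in dimension $d-1$ delivers the bounds for $|\I_3|, \ldots, |\I_{d-1}|$.

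The one delicate point is keeping the intermediate $\eta_i$'s consistent across levels: each application of Lemma~\ref{lem:slicing} occurs on a polynomial obtained from a different substitution, but defining every $\eta_i$ as a single global supremum ensures that the same quantities appear in every bound, so the iterated applications telescope cleanly. The slack introduced by passing to approximate suprema is absorbed into the $O(\cdot)$ notation, whose constants already depend on $\Delta$.
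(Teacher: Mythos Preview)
Your proposal is correct and takes essentially the same approach as the paper, which simply states that the corollary follows by applying Lemma~\ref{lem:slicing} $d-2$ times without spelling out the intermediate $\eta_i$'s. Your explicit definition of the $\eta_i$'s as suprema over all admissible substitutions is a reasonable way to make rigorous what the paper leaves implicit; one small remark is that the inductive structure you describe is slightly awkward (the $\eta_i$'s arising in the recursive call are suprema with $x_d^*$ fixed, not the global ones), but this is immediately repaired by instead applying Lemma~\ref{lem:slicing} once for each $i$ independently, choosing at level $i$ a substitution $(x_{i+1}^*,\ldots,x_d^*)$ that (nearly) realizes the supremum $\eta_i$ and using $\eta_{i-1}$ as the upper bound after one further substitution.
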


    To get the final corollary, we would like the set each $\eta_i$ such that
    the length of all each interval $\I_i$ is bounded by some parameter $\vartheta$
    for $i=3,\cdots,d$.
    We thus set $\eta_{d-i}=\eta_{d-i+1}\vartheta^\U$.

	\begin{restatable}{corollary}{fullslicing}\label{cor:fullslicing}
		Let $P_1(X)=\sum_{\vi\in I_{d,\D}} A_\vi X^\vi$ and 
		$P_2(X)=\sum_{\vi\in I_{d,\D}} B_\vi X^\vi$ be two $d$-variate degree-$\D$
		polynomials in $\R[X]$ and $|A_{\vi}-B_{\vi}|\ge \eta_d$ for some $\vi\in I_{d,\D}$
		for $d\ge 3$.
		
		Suppose for each assignment $X_i\in\I_i$ to $P_1, P_2$,
		where $\I_i$ is an interval for $X_i$, for $i=3,4,\cdots,d$,
		all the coefficients of the resulting bivariate polynomial 
		$Q_1(X_1,X_2)$ and $Q_2(X_1,X_2)$
		differ by at most $\eta_d \vartheta^{\U(d-2)}$,
		then $|\I_i| = O(\vartheta)$
		for all $i=3,4,\cdots,d$.
	\end{restatable}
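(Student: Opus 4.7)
The plan is to obtain this as a direct calibration of Corollary~\ref{cor:almostfullshlicing}. That earlier corollary already shows, under the weaker hypothesis that the resulting bivariate coefficients differ by at most some $\eta_2$, that $|\I_i|=O((\eta_{i-1}/\eta_i)^{1/\U})$ for $i=3,\dots,d$. So the only task is to pick the intermediate thresholds $\eta_2,\eta_3,\dots,\eta_{d-1}$ correctly so that each ratio $\eta_{i-1}/\eta_i$ becomes exactly $\vartheta^{\U}$, making each $|\I_i|=O(\vartheta)$.

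Concretely, I would fix $\eta_d$ to be the assumed coefficient gap, and define the sequence recursively by $\eta_{i-1}=\eta_i\cdot\vartheta^{\U}$ for $i=d,d-1,\dots,3$ (equivalently, $\eta_{d-j}=\eta_d\vartheta^{j\U}$ for $j=0,1,\dots,d-2$). This gives the terminal value $\eta_2=\eta_d\vartheta^{(d-2)\U}$, which is exactly the coefficient-perturbation bound assumed in the hypothesis of the corollary. Hence the hypothesis of Corollary~\ref{cor:almostfullshlicing} is satisfied with this particular choice of $\eta_2,\dots,\eta_{d-1}$.

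Plugging in, the conclusion of Corollary~\ref{cor:almostfullshlicing} yields
\[
|\I_i|=O\bigl((\eta_{i-1}/\eta_i)^{1/\U}\bigr)=O\bigl((\vartheta^{\U})^{1/\U}\bigr)=O(\vartheta)
\]
for every $i=3,4,\dots,d$, which is exactly the desired conclusion.

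There is essentially no real obstacle here, since the underlying analytic work has already been done in Lemma~\ref{lem:slicing} and Corollary~\ref{cor:almostfullshlicing}; the only point that needs care is bookkeeping the telescoping $\eta_{d-j}=\eta_d\vartheta^{j\U}$ and verifying that the induced $\eta_2$ matches the hypothesis's bound $\eta_d\vartheta^{\U(d-2)}$. The constants hidden in the $O(\cdot)$ depend on $\Delta$ (via $\U=\binom{\Delta+1}{2}$) but are uniform across the $d-2$ applications, so the final $O(\vartheta)$ bound holds with a constant depending only on $d$ and $\Delta$.
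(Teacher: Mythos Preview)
Your proposal is correct and is exactly the approach the paper takes: the paper remarks just before the corollary that one sets $\eta_{d-i}=\eta_{d-i+1}\vartheta^{\U}$ so that each ratio in Corollary~\ref{cor:almostfullshlicing} becomes $\vartheta^{\U}$, yielding $|\I_i|=O(\vartheta)$. Your telescoping computation $\eta_2=\eta_d\vartheta^{(d-2)\U}$ and the verification that it matches the hypothesis are precisely the missing bookkeeping.
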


\subsection{Algebra Preliminaries}
	In this section, we review some tools from algebra.
	The first tool we will use is the linearity of determinants from linear algebra.
	
	\begin{theorem}[Linearity of Determinants]
	\label{thm:lindet}
		Let $A=\begin{bmatrix} \ba_1 & \cdots & \ba_n \end{bmatrix}$
		be an $n\times n$ matrix where each $\ba_i\in\R^n$ is a vector.
		Suppose $\ba_j=r\cdot\vw+\vv$ for some $r\in\R$ and $\vw,\vv\in\R^n$,
		then the determinant of $A$, denoted by $\det(A)$, is
		\begin{align*}
			\det(A) &= \det(
							\begin{bmatrix}
								\ba_1&
								\cdots&
								\ba_{j-1}&
								\ba_j&
								\ba_{j+1}&
								\cdots&
								\ba_n
							\end{bmatrix})\\
						&= r \cdot
							\det(
							\begin{bmatrix}
								\ba_1&
								\cdots&
								\ba_{j-1}&
								\vw&
								\ba_{j+1}&
								\cdots&
								\ba_n
							\end{bmatrix})	
							+
							\det(
							\begin{bmatrix}
								\ba_1&
								\cdots&
								\ba_{j-1}&
								\vv&
								\ba_{j+1}&
								\cdots&
								\ba_n
							\end{bmatrix})	.			
		\end{align*}
	\end{theorem}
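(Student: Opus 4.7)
The plan is to reduce the claim to the classical cofactor (Laplace) expansion of the determinant along column $j$, which immediately makes the linearity in the $j$-th column visible. Specifically, for any $n\times n$ matrix with columns $\ba_1,\dots,\ba_n$, the Laplace expansion along column $j$ gives
\begin{equation*}
\det(A) \;=\; \sum_{i=1}^n (-1)^{i+j} a_{i,j}\, M_{i,j},
\end{equation*}
where $M_{i,j}$ denotes the minor obtained by deleting row $i$ and column $j$. The crucial structural observation, which I would state explicitly, is that each minor $M_{i,j}$ depends only on the columns $\ba_1,\dots,\ba_{j-1},\ba_{j+1},\dots,\ba_n$ and not on $\ba_j$ itself. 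Hence the entire collection $\{M_{i,j}\}_{i=1}^n$ is identical across the three matrices appearing in the statement of the theorem.

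Next I would substitute $a_{i,j}=r\,w_i+v_i$, coming from the hypothesis $\ba_j = r\vw + \vv$, into the expansion and distribute:
\begin{equation*}
\det(A) \;=\; \sum_{i=1}^n (-1)^{i+j}(r w_i + v_i)\,M_{i,j} \;=\; r\sum_{i=1}^n (-1)^{i+j} w_i\, M_{i,j} \;+\; \sum_{i=1}^n (-1)^{i+j} v_i\, M_{i,j}.
\end{equation*}
Because the minors coincide with those obtained by replacing column $j$ with $\vw$ (respectively with $\vv$), each of the two remaining sums is itself a Laplace expansion along column $j$ of the corresponding matrix. Reading these sums as determinants yields the two terms on the right-hand side of the theorem.

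There is no real obstacle here: the result is a textbook consequence of the multilinearity of $\det$ in its columns, and the only point that truly requires care is the remark that the minors $M_{i,j}$ do not involve column $j$, so swapping $\ba_j$ for $r\vw+\vv$, $\vw$, or $\vv$ leaves them unchanged. If a self-contained development without assuming the Laplace expansion were desired, I would instead invoke the Leibniz formula $\det(A)=\sum_{\sigma\in S_n}\operatorname{sign}(\sigma)\prod_i a_{i,\sigma(i)}$, isolate the unique factor $a_{\sigma^{-1}(j),j}=r w_{\sigma^{-1}(j)} + v_{\sigma^{-1}(j)}$ in each term, distribute, and regroup into two sums that are again in Leibniz form for the matrices obtained by substituting $\vw$ and $\vv$ into column $j$. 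Either route delivers the conclusion in a few lines.
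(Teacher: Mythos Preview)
Your proof is correct. Note, however, that the paper does not actually supply its own proof of this theorem: it is stated in the ``Algebra Preliminaries'' section as a standard linear-algebra fact and used as a black box, so there is nothing to compare against.
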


	We will use two types of special matrices in the paper.
	The first is Vandermonde matrices.

	\begin{definition}[Vandermonde Matrices]
		An $n\times n$ Vandermonde matrix is defined by $n$ values $x_1,\cdots,x_n$
		such that each entry $e_{ij}=x_i^{j-1}$ for $1\le i,j\le n$.
	\end{definition}

	We can compute the determinant of Vandermonde matrices easily.

	\begin{theorem}[Determinant of Vandermonde Matrices]
	\label{thm:detvand}
		Let $V$ be a Vandermonde matrix defined by parameters $x_1,\cdots,x_n$.
		Then $\det(V)=\prod_{1\le i<j\le n}(x_j-x_i)$.
	\end{theorem}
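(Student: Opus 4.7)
The plan is to prove this classical identity by induction on $n$, using elementary column operations that preserve the determinant and that can be justified via the linearity property stated in Theorem~\ref{thm:lindet}. The base case $n=1$ is immediate: the matrix is $[1]$, whose determinant is $1$, matching the empty product on the right-hand side.

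For the inductive step, I would perform the following sequence of column replacements on $V$, processing columns from right to left: for $j = n, n-1, \ldots, 2$, replace column $j$ by column $j$ minus $x_1$ times column $j-1$. Each such replacement leaves the determinant unchanged. Processing right to left is crucial so that each operation refers to the original (not yet modified) column $j-1$. After all replacements, entry $(i,j)$ for $j \ge 2$ becomes
\[
x_i^{j-1} - x_1 \cdot x_i^{j-2} = x_i^{j-2}(x_i - x_1),
\]
while the first column remains all ones. In particular, row $1$ becomes $(1, 0, 0, \ldots, 0)$, so Laplace expansion along this row reduces $\det(V)$ to an $(n-1)\times(n-1)$ determinant whose $(i,j)$ entry, for $i$ ranging over $\{2,\ldots,n\}$ and $j$ over $\{1,\ldots,n-1\}$, equals $x_i^{j-1}(x_i - x_1)$.

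Next I would pull the common factor $(x_i - x_1)$ out of each row $i$ (here is where Theorem~\ref{thm:lindet} is invoked explicitly, one row at a time). What remains is precisely the $(n-1)\times(n-1)$ Vandermonde matrix on the parameters $x_2, \ldots, x_n$. Therefore
\[
\det(V) = \left(\prod_{i=2}^{n}(x_i - x_1)\right) \det(V'),
\]
and the inductive hypothesis gives $\det(V') = \prod_{2 \le i < j \le n}(x_j - x_i)$. Combining these two factors yields $\prod_{1 \le i < j \le n}(x_j - x_i)$, completing the induction.

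I do not expect a genuine obstacle: this is a textbook identity, and the only care needed is the bookkeeping for the order of column operations and the row-by-row factoring step. As an alternative route, one could view $\det(V)$ as a polynomial in $x_n$ of degree at most $n-1$, observe that it vanishes whenever $x_n$ coincides with any of $x_1, \ldots, x_{n-1}$ (two equal rows force a zero determinant), conclude that $\det(V) = c(x_1,\ldots,x_{n-1}) \prod_{i=1}^{n-1}(x_n - x_i)$, identify the leading coefficient $c$ with the smaller Vandermonde determinant, and then induct. I would nonetheless prefer the column-operation proof above because it uses exactly the linear-algebraic tool (Theorem~\ref{thm:lindet}) that the paper has just set up.
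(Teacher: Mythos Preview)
Your proof is correct and entirely standard. However, there is nothing to compare it to: the paper does not prove Theorem~\ref{thm:detvand}. It is stated in the preliminaries section as a classical fact about Vandermonde matrices, with no accompanying argument. The paper simply invokes it later (in the proof of Lemma~\ref{lem:gen1d} and Lemma~\ref{lem:detX}) as a black box.

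One small remark on your write-up: Theorem~\ref{thm:lindet} as stated in the paper is phrased for columns, while you apply it to rows when factoring out $(x_i-x_1)$. This is of course harmless since the determinant is invariant under transposition, but if you want to be fully self-contained you might note that, or equivalently factor $(x_i-x_1)$ out columnwise after a further rearrangement. Either way the argument is fine.
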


	We also need Sylvester matrices.
	\begin{definition}[Sylvester Matrices]
	\label{def:syl}
		Let $P=\sum_{i=0}^{\D_1}a_ix^i$ and $Q=\sum_{i=0}^{\D_2}b_ix^i$ be two univariate polynomials 
		over $\R$ of degrees $\D_1,\D_2$ respectively .
		Then the Sylvester matrix of $P$ and $Q$, 
		denoted by $\syl(P,Q)$, is a $(\D_1+\D_2)\times(\D_1+\D_2)$ matrix of the following form
		\[\begin{bmatrix}
			a_{\D_1}	&	a_{\D_1-1}	& \cdots			&	a_0		&	0				&	\cdots	&	0			&	0			\\
			0			&	a_{\D_1}		&	a_{\D_1-1}	& 	\cdots	&	a_0			&	\cdots	&	0			&	0			\\
			\vdots	&	\vdots		&	\vdots		&	\ddots	&	\vdots		&	\ddots	&	\vdots	&	\vdots	\\
			0			&	0				&	\cdots		&	a_{\D_1}	&	a_{\D_1-1}	&	\cdots	&	a_1		&	a_0		\\
			b_{\D_2}	&	b_{\D_2-1}	& \cdots			&	b_0		&	0				&	\cdots	&	0			&	0			\\
			0			&	b_{\D_2}		&	b_{\D_2-1}	& 	\cdots	&	b_0			&	\cdots	&	0			&	0			\\
			\vdots	&	\vdots		&	\vdots		&	\ddots	&	\vdots		&	\ddots	&	\vdots	&	\vdots	\\
			0			&	0				&	\cdots		&	b_{\D_2}	&	b_{\D_2-1}	&	\cdots	&	b_1		&	b_0		\\		
		\end{bmatrix}.\]
	\end{definition}
The Sylvester matrix has $\Delta_2$ rows with entries from $P$ and
$\Delta_1$ rows with entries from $Q$. 
For example, the Sylvester matrx of two polynomials $P=p_1 x+p_2$ and $Q=q_1x+q_2$
is
\begin{align*}
  \syl(P,Q)=
  \begin{bmatrix}
    p_1 & p_2 \\
    q_1 & q_2. \\
  \end{bmatrix}
\end{align*}

One application of Sylvester matrices is to compute the resultant,
which is one of the important tools in algebraic geometry.
One significance of the resultant is that it equals zero if and only if $P$ and $Q$ have a common factor.
	\begin{definition}
	\label{def:res}
		Let $P,Q$ be two univariate polynomials over $\R$.
		The resultant of $P$ and $Q$, denoted by $\res(P,Q)$, is defined to be the determinant of the Sylvester matrix
    of $P$ and $Q$, i.e., $\res(P,Q)=\det(\syl(P,Q))$.
	\end{definition}	


\section{An Algebraic Geometry Lemma}
\label{sec:alg}
In this section, we prove an important algebraic geometry lemma that will later be used in our lower bound framework.

\begin{lemma}
\label{lem:detX}
	Let $F$ and $G$ be two univariate polynomials on $x$ of degree $\Delta_F$ and $\Delta_G$ respectively and 
	the leading coefficient of $G$ is $1$.
	Let $P(x, y)\equiv yG(x)-F(x)$.

 	Let $L$ be a set of  $\ell = \D_1+\D_G+1$ points $(x_k,y_k)$ where $\D_1\ge\D_F-1$ 
	and each $x_k=\Theta(1)$ 
	such that $|P(x_k,y_k)| \le \varepsilon<1$ for a parameter
	$\varepsilon$, $G(x_i) = \Theta(1)$.
	
	Let $V$ be a vector of $\ell$ monomials consisting of monomials $x^i$ for $0 \le i \le \Delta_1$ and
	monomials $yx^i$ for $0 \le i \le \Delta_G-1$. 
	
	If $A$ is an $\ell \times \ell$ matrix where the $t$-th row of $A$ is the evaluation of the vector $V$ 
	on point $(x_k,y_k)$, then $|\det(A)| \ge \Omega(\res(G,F)\lambda^{\ell^2})-O(\varepsilon)$
	where $\lambda = \ensuremath{\min}_{1\le k_1<k_2\le\ell}|x_{k_1}-x_{k_2}|$. 
\end{lemma}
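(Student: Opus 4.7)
The plan is to split $\det(A)$ into a ``main'' determinant $\det(A_0)$, obtained by replacing each $y_k$ with the exact value $F(x_k)/G(x_k)$ that it approximates, plus an $O(\varepsilon)$ error term; then to evaluate $\det(A_0)$ via a Vandermonde--Sylvester factorisation.

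First, since $|y_k G(x_k)-F(x_k)|\le\varepsilon$ and $G(x_k)=\Theta(1)$, I would set $\delta_k := y_k - F(x_k)/G(x_k)$, so that $|\delta_k|=O(\varepsilon)$. The $k$-th row of $A$ splits as the sum of a \emph{main} row (with $y_k$ replaced by $F(x_k)/G(x_k)$) and an \emph{error} row $\delta_k\cdot(0,\dots,0,1,x_k,\dots,x_k^{\Delta_G-1})$ whose first $\Delta_1+1$ coordinates vanish. Applying multilinearity (Theorem~\ref{thm:lindet}) row by row expands $\det(A)$ into $2^\ell$ terms. Exactly one, $\det(A_0)$, is ``all main''; every other term picks up at least one factor $|\delta_k|=O(\varepsilon)$ and its remaining cofactor has all entries of magnitude $O(1)$ (because $x_k=\Theta(1)$, $G(x_k)=\Theta(1)$, hence $F(x_k)/G(x_k)=O(1)$), so the whole term is $O(\varepsilon)$. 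Since $\ell$ is a constant, $|\det(A)-\det(A_0)|=O(\varepsilon)$.

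Next, I would scale the $k$-th row of $A_0$ by $G(x_k)=\Theta(1)$ to obtain a matrix $B$ with $\det(B)=\prod_k G(x_k)\cdot\det(A_0)$, whose $k$-th row is the evaluation at $x_k$ of the $\ell$ polynomials $\{x^i G\}_{i=0}^{\Delta_1}\cup\{x^j F\}_{j=0}^{\Delta_G-1}$, each of degree at most $\ell-1$. Writing these polynomials in the monomial basis $1,x,\dots,x^{\ell-1}$ factors $B=V\cdot C$, where $V$ is the $\ell\times\ell$ Vandermonde matrix of $x_1,\dots,x_\ell$ and $C$ is the matrix of polynomial coefficients. By Theorem~\ref{thm:detvand}, $|\det(V)|\ge\lambda^{\binom{\ell}{2}}\ge\lambda^{\ell^2}$ (the regime of interest being $\lambda\le 1$). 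To evaluate $|\det(C)|$ I would exploit both that $G$ is monic and that $\Delta_1\ge\Delta_F-1$: setting $s:=\Delta_1-\Delta_F+1\ge 0$, the bottom $s$ rows of $C$ have zero entries in the $F$-columns and in the $\Delta_F$ ``original'' $G$-columns (such entries would require coefficients of $F$ or $G$ beyond their respective degrees), while the $s\times s$ sub-block formed by these rows and the $s$ ``extra'' $G$-columns $x^{\Delta_F}G,\dots,x^{\Delta_1}G$ is unit upper-triangular (the diagonal $1$'s coming from the monic leading coefficient of $G$) and so has determinant $1$. A Laplace expansion along the bottom $s$ rows therefore reduces $|\det(C)|$ to the absolute value of the $(\Delta_F+\Delta_G)\times(\Delta_F+\Delta_G)$ minor spanned by the original $G$-columns and the $F$-columns, which, up to a permutation of columns, is $\syl(G,F)^T$; hence $|\det(C)|=|\res(G,F)|$. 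Putting the pieces together,
\[
|\det(A_0)|\;=\;\frac{|\det(V)|\cdot|\det(C)|}{\prod_k|G(x_k)|}\;=\;\Omega\bigl(|\res(G,F)|\,\lambda^{\ell^2}\bigr),
\]
and combining with the $O(\varepsilon)$ error from the first step yields the claim.

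The step I expect to be most delicate is this final analysis of $C$: namely, verifying the zero pattern in the bottom $s$ rows, checking that the $s\times s$ sub-block is unit upper-triangular in the correct orientation, and tracking the sign of the permutation identifying the reduced minor with $\syl(G,F)^T$. The boundary case $s=0$ (when $\Delta_1=\Delta_F-1$ and $C$ is essentially the Sylvester matrix itself) serves as a useful sanity check for the general indexing.
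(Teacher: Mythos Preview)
Your argument is correct and follows essentially the same approach as the paper: both split off an $O(\varepsilon)$ error via multilinearity, scale each row by $G(x_k)$, and then reduce to a Vandermonde determinant times the resultant, exploiting the monic leading coefficient of $G$ to handle the $s=\Delta_1-\Delta_F+1$ ``extra'' $G$-columns. The only organisational difference is that the paper applies $\syl(G,F)^{-1}$ as a column transformation and then column-reduces the remaining columns, whereas you phrase the same computation as a single factorisation $B=V\cdot C$ followed by a Laplace expansion on the bottom $s$ rows of $C$; your packaging is arguably cleaner, but the content is identical.
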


\begin{proof}
	Note that if $\res(G,F)=0$, then there is nothing to prove and thus we can assume this is not the case.
	Now observe that since $G(x_k) = \Theta(1)$, 
	we can write $y_k= \frac{F(x_k)}{G(x_k)} + \gamma_k$ where $|\gamma_k|=O(\varepsilon)$. 

    Now consider the matrix $A$ and plug in this value of $y_k$.
    An entry of $A$ is in the form of a monomial $yx^i$ being evaluated on a point
    $(x_k,y_k)$ and thus we have:
    \begin{equation}
        y_kx_k^i = \left(\frac{F(x_k)}{G(x_k)} + \gamma_k\right) x_k^i = \frac{F(x_k)}{G(x_k)} x_k^i + \gamma_{i,k}\label{eq:ep}
    \end{equation}
    where $|\gamma_{i,k}| = O(\varepsilon)$.
    We use the linearity of determinants (see Theorem~\ref{thm:lindet}) in a similar fashion that was also used in~\cite{ac22}. 
    In particular, consider a column of the matrix $A$; it consists of the evaluations of a monomial
    $yx^i$ on all the points $(x_1, y_1), \cdots, (x_\ell,y_\ell)$.
    Using Eq.~(\ref{eq:ep}), we can write this column as the addition of a column $C_{i}$ that consists of the
    evaluation of the rational function $\frac{F(x)}{G(x)} x^i$ on the points $x_1, \cdots, x_\ell$ and a 
    column $\Gamma_{i}$ that consists of all the values $\gamma_{i,k}$ for $1 \le k \le \ell$.
    By the linearity of determinants,
    we can write the determinant of $A$ as the sum of determinants of two matrices  where 
    one matrix includes the column $C_{i}$ and the other has  $\Gamma_{i}$;
    observe that the magnitude of the determinant of the latter matrix can be upper bounded by
    $O(\varepsilon)$, with hidden constants that depend on $\Delta$.
    By performing this operation on all the columns, we can separate all the entries involving
    $\gamma_{i,k}$ into separate matrices and the magnitude of sum of the determinants can be
    bounded by $O(\varepsilon)$.
    
    Let $B$ be the matrix that remains after removing all the $\gamma_{i,k}$ terms. 
    We bound $|\det(B)|$.
    Note that $B$ consists of row vectors
    \[
        \begin{matrix}
            U = (1 & x & \cdots & x^{\D_1} & y & yx & \cdots & yx^{\D_G-1}).
        \end{matrix}
    \]
    evaluated at some value $x=x_k$ and $y=\frac{F(x_k)}{G(x_k)}$ at its $k$-th row. 
    This is equivalent to the evaluation of the following vector:
    \[
        \begin{matrix}
            (1 & x & \cdots & x^{\D_1} & \frac{F}{G} & \frac{F}{G}x & \cdots & \frac{F}{G}x^{\D_G-1}).
        \end{matrix}
    \]
    Observe that row $t$ of matrix $B$ will be evaluating $U$ on the point $x_k$. 
    Since $G(x_k) = \Theta(1) \not = 0$,  we can multiply row $k$ by $G(x_k)$ and this will only
    change the determinant by a constant factor. 
    With a slight abuse of the notation, let $B$ denote the matrix after this multiplication step. 
    Thus, the columns of $B$ now correspond  to the evaluation of the following vector. 
    \[
        \begin{matrix}
            (G & Gx & \cdots & Gx^{\D_1} & F & Fx & \cdots & Fx^{\D_G-1}).
        \end{matrix}
    \]
    Note that we can exchange columns and it will only flip the signs of the determinant of a matrix.
    We will focus on bounding the determinant of 
    \[
        \begin{matrix}
            (Gx^{\D_1} & Gx^{\D_1-2} & \cdots & G & Fx^{\D_G-1} & Fx^{\D_G-2}& \cdots & F).
        \end{matrix}
    \]    

	The key observation is that there is a strong connection between the Sylvester matrix of $G,F$ and matrix $B$.
	Recall that the Sylvester matrix of $G$ and $F$ is of the form
	\[
	\syl(G,F)=
	\begin{bmatrix}
		G_{\D_G}	&	G_{\D_G-1}	& \cdots			&	G_0			&	0				&	\cdots	&	0			&	0			\\
		0				&	G_{\D_G}	&	G_{\D_G-1}	& 	\cdots		&	G_0			&	\cdots	&	0			&	0			\\
		\vdots		&	\vdots		&	\vdots		&	\ddots		&	\vdots		&	\ddots	&	\vdots	&	\vdots	\\
		0				&	0				&	\cdots		&	G_{\D_G}	&	G_{\D_G-1}	&	\cdots	&	G_1		&	G_0		\\
		F_{\D_F}		&	F_{\D_F-1}	& \cdots			&	F_0			&	0				&	\cdots	&	0			&	0			\\
		0				&	F_{\D_F}		&	F_{\D_F-1}	& 	\cdots		&	F_0			&	\cdots	&	0			&	0			\\
		\vdots		&	\vdots		&	\vdots		&	\ddots		&	\vdots		&	\ddots	&	\vdots	&	\vdots	\\
		0				&	0				&	\cdots		&	F_{\D_F}		&	F_{\D_F-1}	&	\cdots	&	F_1		&	F_0		\\		
	\end{bmatrix},
	\]
	where $G_i$ (resp. $F_i$) is the coefficient of $x^i$ in $G$ (resp. $F$).
	Observe that 
	\begin{align*}
		\begin{matrix}
			(Gx^{\D_F-1} & Gx^{\D_F-2} & \cdots & G & Fx^{\D_G-1} & Fx^{\D_G-2}& \cdots & F)
		\end{matrix}
		=\\
		\syl(G,F) \cdot
		\begin{matrix}
			(x^{\D_F+\D_G-1}		&	x^{\D_F+\D_G-2}		&	\cdots	&	x		&	1)
		\end{matrix}^T,		
	\end{align*}
	which means that by the linear transformation described by $\syl(G,F)^{-1}$,
	which exists as $\res(G,F)=\det(\syl(G,F))\neq 0$,
	we can turn the last $\D_F+\D_G$ columns in $B$ to
	\[
		\begin{matrix}
			(x^{\D_F+\D_G-1}		&	x^{\D_F+\D_G-2}		&	\cdots	&	x		&	1)
		\end{matrix}.
	\]
	Since the remaining columns are all polynomials in $x$
	and the highest degree in column $i$ is $\D_1-i$
	for $i=0,1,\cdots,\D_F$,
	by using column operations, 
	we can eliminate all lower degree terms for each column
	and the only term left for column $i$ is $G_{\D_G}x^{\D_1-i}$.
	Note that column operations do not change the determinant.
	
	By assumption, the leading coefficients of $G$ is $1$, i.e., $G_{\D_G}=1$.
	Thus, this transforms $B$ into a Vandermonde matrix $V_B$ of size $\ell\times\ell$.
	By Theorem~\ref{thm:detvand},
	$|\det(V_B)|=\Omega(\lambda^{\ell^2})$.
	Since multiplying the inverse of $\syl(G,F)$
	scales $\det(B)$ by a factor of $\Theta(|\det(\syl(G,F)^{-1})|)=\Theta(|\res(G,F)^{-1}|)$,
	we bound $|\det(B)|=|\det(V_B)|/(1/|\res(G,F)|)=\Omega(|\res(G,F)|\lambda^{\ell^2})$.
  The claim then follows from this. 
\end{proof}


\section{Lower Bounds for Flat Intersection Reporting}

We are now ready to show lower bounds for flat intersection reporting.
We first establish a reduction from special polynomial slab reporting problems
to flat intersection reporting.

\subsection{A Reduction from Polynomial Slab Range Reporting to Flat-hyperslab Intersection Reporting}

	We study the following flat intersection reporting problem.

	\begin{definition}[Flat-hyperslab Intersection Reporting]
		In the $t$-flat-hyperslab intersection reporting problem,
		we are given a set $\sS$ of $n$ $(d-t)$-dimensional hyperslabs in $\R^d$,
        i.e., regions created by a linear translation of  $(d-t-1)$-flats,
		where $0 \le t < d$, as the input, and the goal is to preprocess $\sS$ into a data structure
		such that given any query $t$-flat $\gamma$,
		we can output $\sS\cap\gamma$,
		i.e., the set of $(d-t)$-hyperslabs intersecting the query $t$-flat, efficiently.
	\end{definition}
	
	First, observe that any $t$-flat that is not parallel to any of the axes can be formulated as 
	\[
		\begin{bmatrix}
			a_{0,1} 		& 			0 			& 		\cdots 		& 			0 			& 		0 				\\
			0 				& 			1 			&		\cdots 		& 			0 			& 		0 				\\		
			\vdots 		& 		\vdots 		&		\ddots 		& 		\vdots 		& 	\vdots 			\\
			0 				& 			0 			&		\cdots 		& 			1 			& 		0 				\\		
			a_{1,1}		&		a_{1,2}		&		\cdots		&		a_{1,t}		&	a_{1,t+1}		\\
			\vdots		&		\vdots		&		\ddots		&		\vdots		&	\vdots			\\
			a_{d-t,1}		&		a_{d-t,2}		&		\cdots		&		a_{d-1,t}		&	a_{d-t,t+1}		\\
		\end{bmatrix}
		\cdot
		\begin{bmatrix}
			\tau_1		\\
			\vdots	\\
			\tau_t			\\
			1
		\end{bmatrix}
		=
		\begin{bmatrix}
			x_1		\\
			\vdots	\\
			x_d
		\end{bmatrix}
		,
	\]	
	where $a_{i,j}$'s are the parameters defining the $t$-flat, 
	and $\tau_1,\cdots,\tau_t$ are the free variables that generate points in the $t$-flat.
	Note that we only need $(d-t)(t+1)$ independent $a_{i,j}$'s
	to define a $t$-flat.

	On the other hand, we consider $(d-t)$-hyperslabs of form
	\[
		\begin{bmatrix}
			1			&	0			&	\cdots	&	0			&	b_{1,1}		&	b_{1,2}		&	\cdots	&	b_{1,d-t}		\\
			0			&	1			&	\cdots	&	0			&	b_{2,1}		&	b_{2,2}		&	\cdots	&	b_{2,d-t}		\\
			\vdots	&	\vdots	&	\ddots	&	\vdots			&	\vdots		&	\vdots		&	\ddots	&	\vdots	\\
			0			&	0			&	\cdots	&	1			&	b_{t,1}		&	b_{t,2}		&	\cdots	&	b_{t,d-t}			\\
			0			&	0			&	\cdots	&	0			&	b_{t+1,1}	&	b_{t+1,2}	&	\cdots	&	b_{t+1,d-t}		\\
		\end{bmatrix}
		\cdot
		\begin{bmatrix}
			x_1		\\
			x_2		\\
			\vdots	\\
			x_{d-1}	\\
			x_d
		\end{bmatrix}
		=
		\begin{bmatrix}
			0		\\
			0		\\
			\vdots	\\
			0	\\
			-1+w
		\end{bmatrix},
	\]
	where $b_{i,j}$'s are the parameters defining a $(d-t-1)$-flat,
	and $w\in[0,w_0]$ is a parameter which adds one extra dimension to the flat to make it $(d-t)$-dimensional;
  	in essence, we will be considering all the $(d-t-1)$-flats for all $w \in [0,w_0]$ which will turn the query into
  	a $(d-t)$-hyperslab.
	
	Therefore, the intersection of a $t$-flat and a $(d-t)$-hyperslab must be a solution to
	\[
		\begin{bmatrix}
			1			&	0			&	\cdots	&	0			&	b_{1,1}	&	b_{1,2}	&	\cdots	&	b_{1,d-t}		\\
			0			&	1			&	\cdots	&	0			&	b_{2,1}	&	b_{2,2}	&	\cdots	&	b_{2,d-t}		\\
			\vdots	&	\vdots	&	\ddots	&	\vdots	&	\vdots	&	\vdots	&	\ddots	&	\vdots			\\
			0			&	0			&	\cdots	&	1			&	b_{t,1}	&	b_{t,2}	&	\cdots	&	b_{t,d-t}			\\
			0			&	0			&	\cdots	&	0	&	b_{t+1,1}	&	b_{t+1,2}	&	\cdots	&	b_{t+1,d-t}		\\
		\end{bmatrix}
		\cdot
		\begin{bmatrix}
			a_{0,1}	& 	0 				& 	\cdots 	& 	0 				& 	0 					\\
			0 					& 	1 				&	\cdots 	& 	0 				& 	0 					\\		
			\vdots 			& 	\vdots 		&	\ddots 	& 	\vdots 		& 	\vdots 			\\
			0 					& 	0 				&	\cdots 	& 	1 				& 	0 					\\		
			a_{1,1}			&	a_{1,2}		&	\cdots	&	a_{1,t}		&	a_{1,t+1}		\\
			\vdots			&	\vdots		&	\ddots	&	\vdots		&	\vdots			\\
			a_{d-t,1}			&	a_{d-t,2}		&	\cdots	&	a_{d-1,t}		&	a_{d-t,t+1}		\\
		\end{bmatrix}
		\cdot
		\begin{bmatrix}
			\tau_1		\\
			\tau_2		\\
			\tau_3 		\\
			\vdots	\\
			\tau_t			\\
			1
		\end{bmatrix}
		=
		\begin{bmatrix}
			0		\\
			0		\\
			\vdots	\\
			0	\\
			-1+w
		\end{bmatrix}.
	\]	
	Multiplying the two matrices, we obtain the following system
	\[
		\begin{bmatrix}
			a_{0,1}+\sum_{i=1}^{d-t}a_{i,1}b_{1,i}		&	\sum_{i=1}^{d-t}a_{i,2}	b_{1,i}		&	\cdots	&	\sum_{i=1}^{d-t}a_{i,t+1}	b_{1,i}	\\
			\sum_{i=1}^{d-t}a_{i,1}b_{2,i}					&	1+\sum_{i=1}^{d-t}a_{i,2}b_{2,i}		&	\cdots	&	\sum_{i=1}^{d-t}a_{i,t+1}b_{2,i}		\\
			\vdots												&	\vdots										&	\ddots	&	\vdots										\\
			\sum_{i=1}^{d-t}a_{i,1}b_{t+1,i}				&	\sum_{i=1}^{d-t}a_{i,2}b_{t+1,i}		&	\cdots	&	\sum_{i=1}^{d-t}a_{i,t+1}b_{t+1,i}	\\
		\end{bmatrix}
		\cdot
		\begin{bmatrix}
			\tau_1		\\
			\vdots	\\
			\tau_t			\\
			1
		\end{bmatrix}
		=
		\begin{bmatrix}
			0		\\
			\vdots	\\
			0	\\
			-1+w
		\end{bmatrix}.
	\]
	
	We denote this linear system by $A\vtau=\vs$ and assume 
	\begin{align}
		\det(A)\neq 0\label{eq:as1}
	\end{align}
	which is the case when the $t$-flat and the $(d-t)$-hyperslab properly intersect,
	and this system has a solution iff
	the last entry of the solution vector is $1$.
	So by Cramer's rule, we have
	\[
		1 = \frac{
					\begin{vmatrix}
						a_{0,1}+\sum_{i=1}^{d-t}a_{i,1}b_{1,i}		&	\sum_{i=1}^{d-t}a_{i,2}	b_{1,i}		&	\cdots	&	0			\\
						\sum_{i=1}^{d-t}a_{i,1}	b_{2,i}				&	1+\sum_{i=1}^{d-t}a_{i,2}b_{2,i}	&	\cdots	&	0			\\
						\vdots												&	\vdots										&	\ddots	&	\vdots	\\
						\sum_{i=1}^{d-t}a_{i,1}b_{t+1,i}				&	\sum_{i=1}^{d-t}a_{i,2}b_{t+1,i}		&	\cdots	&	-1+w		\\
					\end{vmatrix}
				}{
					\begin{vmatrix}
						a_{0,1}+\sum_{i=1}^{d-t}a_{i,1}b_{1,i}	&	\sum_{i=1}^{d-t}a_{i,2}b_{1,i}		&	\cdots	&	\sum_{i=1}^{d-t}a_{i,t+1}b_{1,i}		\\
						\sum_{i=1}^{d-t}a_{i,1}	b_{2,i}			&	1+\sum_{i=1}^{d-t}a_{i,2}b_{2,i}	&	\cdots	&	\sum_{i=1}^{d-t}a_{i,t+1}b_{2,i}		\\
						\vdots											&	\vdots									&	\ddots	&	\vdots										\\
						\sum_{i=1}^{d-t}a_{i,1}b_{t+1,i}			&	\sum_{i=1}^{d-t}a_{i,2}b_{t+1,i}	&	\cdots	&	\sum_{i=1}^{d-t}a_{i,t+1}b_{t+1,i}	\\
					\end{vmatrix}
				}.
	\]
	By the linearity of determinant, we have 
	\begin{equation}
		0 =
		\begin{vmatrix}
			a_{0,1}+\sum_{i=1}^{d-t}a_{i,1}b_{1,i}	&	\sum_{i=1}^{d-t}a_{i,2}b_{1,i}		&	\cdots	&	\sum_{i=1}^{d-t}a_{i,t+1}b_{1,i}				\\
			\sum_{i=1}^{d-t}a_{i,1}b_{2,i}				&	1+\sum_{i=1}^{d-t}a_{i,2}b_{2,i}	&	\cdots	&	\sum_{i=1}^{d-t}a_{i,t+1}b_{2,i}				\\
			\vdots											&	\vdots									&	\ddots	&	\vdots												\\
			\sum_{i=1}^{d-t}a_{i,1}b_{t+1,i}			&	\sum_{i=1}^{d-t}a_{i,2}b_{t+1,i}	&	\cdots	&	1+\sum_{i=1}^{d-t}a_{i,t+1}b_{t+1,i}-w		\\
		\end{vmatrix}\label{eq:reddet}.
	\end{equation}
	
  Consider the value of the above determinant using Leibniz formula for determinants, which is the sum of 
  $(t+1)!$ terms.
  Consider the terms that have at most 1 factor of $b_{i,j}$; these can only come from the diagonals. 
  Thus, any $t$-flat parameterized by $\ba=(a_{i,j})$
	intersects a query $(d-t)$-hyperslab parameterized by $\bb=(b_{i,j})$ if and only if
	\begin{align*}
		0		&= 	a_{0,1} + a_{0,1}\sum_{j=2}^{t+1}\sum_{i=1}^{d-1}a_{i,j} b_{j,i}
				+ \sum_{i=1}^{d-1}a_{i,1}b_{1,i}
				+	E(\ba,\bb) + f(\ba,\bb,w)
				&= P(\ba,\bb) + f(\ba,\bb,w),
	\end{align*}
	where $E(\ba,\bb)$ contains the sum of products of at least two distinct $a_{i_1,i_2}b_{i_3,i_1}$
	and $f(\ba,\bb,w)$ is a polynomial with factor $w$.
	
	Note that after fixing $\ba,\bb$, $f(\ba,\bb,w)$ is a polynomial in $w$ and we assume that
	\begin{align}
		\frac{\partial f(\ba,\bb,w)}{\partial w} 
		&=
		\begin{vmatrix}
			a_{0,1}+\sum_{i=1}^{d-t}a_{i,1}b_{1,i}	&	\sum_{i=1}^{d-t}a_{i,2}b_{1,i}		&	\cdots	&	\sum_{i=1}^{d-t}a_{i,t}b_{1,i}		\\
			\sum_{i=1}^{d-t}a_{i,1}b_{2,i}				&	1+\sum_{i=1}^{d-t}a_{i,2}b_{2,i}	&	\cdots	&	\sum_{i=1}^{d-t}a_{i,t}b_{2,i}		\\
			\vdots											&	\vdots									&	\ddots	&	\vdots										\\
			\sum_{i=1}^{d-t}a_{i,1}b_{t,i}				&	\sum_{i=1}^{d-t}a_{i,2}b_{t,i}		&	\cdots	&	1+\sum_{i=1}^{d-t}a_{i,t}b_{t,i}	\\
		\end{vmatrix} > 0. \label{eq:as2}
	\end{align}
	
	This implies the following lemma.
	\begin{lemma}
	\label{lem:redhelp}
		Assuming $\ba,\bb$ satisfying Assumptions~(\ref{eq:as1}) and (\ref{eq:as2}),
		for any fixed $\ba$, there is a $\bb$ such that $0\le P(\ba,\bb) \le -f(\ba,\bb,w_0)$
		if and only if there is some $w\in[0,w_0]$ such that $P(\ba,\bb)+f(\ba,\bb,w)=0$.
	\end{lemma}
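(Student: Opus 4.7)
The plan is to use linearity of the determinant in $w$ to reduce the lemma to a one-dimensional intermediate-value argument. The key observation is that in the matrix appearing in equation~(\ref{eq:reddet}), the variable $w$ appears exactly once, namely as the additive term $-w$ in the $(t+1,t+1)$ diagonal entry. Applying Theorem~\ref{thm:lindet} to the last column, we can split the determinant into its value at $w=0$ plus a correction proportional to $-w$, and the correction is itself a determinant that is independent of $w$ (in fact it is precisely the $t\times t$ minor displayed in assumption~(\ref{eq:as2})). Combined with the author's decomposition $\det(M(w))=P(\ba,\bb)+f(\ba,\bb,w)$ and with the fact that $f$ has $w$ as a factor, this tells us that $f(\ba,\bb,w)$ is linear in $w$ with no constant term, so $f(\ba,\bb,w)=w\cdot \frac{\partial f}{\partial w}(\ba,\bb)$.

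Second, by assumption~(\ref{eq:as2}) the slope $\partial f/\partial w$ has a definite sign (strictly positive in the author's convention) and does not depend on $w$. Hence $-f(\ba,\bb,\cdot)$ is a continuous, strictly monotone affine function on $[0,w_0]$ that takes the value $0$ at $w=0$ and the value $-f(\ba,\bb,w_0)$ at $w=w_0$. The image of $[0,w_0]$ under $w\mapsto -f(\ba,\bb,w)$ is therefore exactly the closed interval with endpoints $0$ and $-f(\ba,\bb,w_0)$.

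Finally, rewriting $P(\ba,\bb)+f(\ba,\bb,w)=0$ as $P(\ba,\bb)=-f(\ba,\bb,w)$ and combining with the previous step, a solution $w\in[0,w_0]$ exists if and only if $P(\ba,\bb)$ lies in the above closed interval, which in the given sign convention is exactly $[0,-f(\ba,\bb,w_0)]$. This yields both directions of the biconditional. Assumption~(\ref{eq:as1}) is only used implicitly to justify that Cramer's rule was applied validly when deriving equation~(\ref{eq:reddet}), so it plays no further role here. I do not foresee any real obstacle: the whole proof is an application of linearity of the determinant followed by a one-variable intermediate-value theorem; the only point requiring care is keeping the sign conventions in $f$ and in the cofactor $\det(M_{t+1,t+1})$ consistent with the displays in~(\ref{eq:reddet}) and~(\ref{eq:as2}).
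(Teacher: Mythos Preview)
Your proposal is correct and follows essentially the same route as the paper's proof: both arguments boil down to $f(\ba,\bb,0)=0$, monotonicity of $w\mapsto f(\ba,\bb,w)$ from Assumption~(\ref{eq:as2}), and the intermediate value theorem. Your observation that $f$ is in fact \emph{linear} in $w$ (via Theorem~\ref{thm:lindet}) is a mild sharpening of the paper's ``$f$ is a polynomial in $w$,'' but it is not needed for the conclusion and does not change the structure of the argument.
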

	\begin{proof}
		This follows from $f(\ba,\bb,w)$ is a polynomial in $w$ and thus continuous
		and also $f(\ba,\bb,0)=0$ as $w$ is a factor of $f$ and decreasing in $[0,w_0]$ as $\frac{\partial f}{\partial w}>0$.
	\end{proof}
	
	Fixing $\ba$ in $P(\ba,\bb)$, we obtain a polynomial in $\bb$.
	Let $(P(\ba,\bb),f(\ba,\bb,w_0))=\{\bb:0\le P(\ba,\bb)\le-f(\ba,\bb,w_0)\}$ be a polynomial slab.
	This essentially establishes a reduction between polynomial slab reporting 
	and flat intersection reporting.
	
	\begin{corollary}
	\label{cor:reduction}
		Assuming $\ba,\bb$ satisfying Assumptions~(\ref{eq:as1}) and (\ref{eq:as2}),
		for any fixed $\ba$, there is a $\bb$ such that $\bb\in (P(\ba,\bb), f(\ba,\bb,w_0))$
		if and only if a $t$-flat parameterized by $\ba$ intersects a $(d-t)$-hyperslab parameterized by $\bb$.
	\end{corollary}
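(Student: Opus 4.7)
The plan is to observe that Corollary~\ref{cor:reduction} is essentially a repackaging of Lemma~\ref{lem:redhelp} in the language of polynomial slabs, together with the reduction chain derived in the paragraphs preceding it. There is no new algebraic content to produce; the task is to make explicit that the geometric intersection condition is exactly the condition appearing in Lemma~\ref{lem:redhelp}, and then quote the definition of the slab $(P(\ba,\bb),f(\ba,\bb,w_0))$.

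First I would recall the parametric set-up: a $(d-t)$-hyperslab with parameters $\bb$ is by construction the union, over $w\in[0,w_0]$, of the translated $(d-t-1)$-flats obtained from the right-hand side $(-1+w)$ in the defining system. Hence a $t$-flat parameterized by $\ba$ intersects this hyperslab if and only if, for some $w\in[0,w_0]$, the linear system $A\vtau=\vs$ has a solution whose last coordinate equals $1$. Assumption~(\ref{eq:as1}) that $\det(A)\neq 0$ lets us apply Cramer's rule, and the determinant manipulation carried out above rewrites that requirement as Eq.~(\ref{eq:reddet}); expanding the determinant via the Leibniz formula and collecting terms by the number of $b_{i,j}$ factors yields the decomposition $P(\ba,\bb)+f(\ba,\bb,w)=0$. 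Thus the geometric intersection condition is logically equivalent to the existence of some $w\in[0,w_0]$ satisfying this polynomial equation.

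Next I would invoke Lemma~\ref{lem:redhelp} directly. Assumption~(\ref{eq:as2}) guarantees $\partial f/\partial w>0$, so $f(\ba,\bb,w)$ is strictly increasing in $w$; combined with $f(\ba,\bb,0)=0$ (since $w$ divides $f$), the image of $w\mapsto -f(\ba,\bb,w)$ on $[0,w_0]$ is exactly $[0,-f(\ba,\bb,w_0)]$. By continuity, there exists $w\in[0,w_0]$ with $P(\ba,\bb)+f(\ba,\bb,w)=0$ if and only if $0\le P(\ba,\bb)\le -f(\ba,\bb,w_0)$, which is the statement of Lemma~\ref{lem:redhelp}. By definition of the polynomial slab, this last inequality is precisely $\bb\in(P(\ba,\bb),f(\ba,\bb,w_0))$, completing the equivalence claimed in the corollary.

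The only potential obstacle is bookkeeping: making sure that the ``intersection'' in the definition of flat-hyperslab intersection reporting is interpreted as the family parameterized by $w\in[0,w_0]$, and that the two sign conventions (the $-1+w$ on the right-hand side and the sign of $\partial f/\partial w$ in Assumption~(\ref{eq:as2})) line up so that the range of $-f(\ba,\bb,w)$ over $[0,w_0]$ is the non-negative interval $[0,-f(\ba,\bb,w_0)]$. Once this is verified, the proof is a one-line reduction to Lemma~\ref{lem:redhelp}.
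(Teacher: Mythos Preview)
Your proposal is correct and matches the paper's approach exactly: the paper does not give a separate proof of the corollary but simply states it as an immediate consequence of Lemma~\ref{lem:redhelp} together with the determinant manipulation leading to Eq.~(\ref{eq:reddet}) and the definition of the polynomial slab. Your write-up just makes this chain explicit, and your closing caveat about the sign conventions is well placed, since the paper's own statement of Assumption~(\ref{eq:as2}) and the monotonicity claim in the proof of Lemma~\ref{lem:redhelp} require exactly that bookkeeping to be consistent.
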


\subsection{Lower Bounds for Flat-hyperslab Intersection Reporting}

We are now ready to prove the lower bounds.
We show lower bounds for $1$-flat-hyperslab intersection reporting in $\R^d$
and $2$-flat-hyperslab intersection reporting in $\R^4$.

First observe by setting $t=1$ in Eq.~(\ref{eq:reddet}) and using Corollary~\ref{cor:reduction}
a polynomial slab reporting problem with polynomial
\begin{align}
    P_1(\ba,\bb)	&= 	a_{0,1} + a_{0,1}\sum_{i=1}^{d-1}a_{i,2} b_{2,i}
            + \sum_{i=1}^{d-1}a_{i,1}b_{1,i}
            +	\sum_{i,j=1\land i\neq j}^{d-1} (a_{i,1}a_{j,2}-a_{j,1}a_{i,2})b_{1,i}b_{2,j}\nonumber \\
                &=	b_{1,1}G_1(b_{2,2})+ F_1(b_{2,2}), \label{eq:p1}
\end{align}
reduces to a line-hyperslab intersection reporting problem, 
where to get $G_1$, we have collected all the monomials that have $b_{1,1}$ in them
and then we have factored $b_{1,1}$ out and we are considering it as a polynomial of
$b_{2,2}$ (all the other variables are considered ``constant''). 
$F_1$ is defined similarly by considering the remaining terms as a function of 
$b_{2,2}$. 
Observe that the polynomial does not have any term with degree 3. 
Let $G_1 = g_{1,1} b_{2,2} + g_{1,0}$ and 
$F_1 = f_{1,1} b_{2,2} + f_{1,0}$.

Similarly, polynomial slab reporting with
\begin{align}
    P_2(\ba,\bb) &= a_{0,1} + a_{0,1}\sum_{j=1}^2\sum_{i=2}^3a_{j,i}b_{i,j} + \sum_{j=1}^2a_{j,1}b_{1,j}\nonumber \\
    &+	a_{0,1}\sum_{j,l=1\land j\neq l}^2(a_{j,2}a_{l,3}-a_{j,3}a_{l,2})b_{2,j}b_{3,l}
    +	\sum_{j,l=1\land j\neq l}^2\sum_{k=2}^3(a_{j,1}a_{l,k}-a_{j,k}a_{l,1})b_{1,j}b_{k,l} \nonumber \\
    &= b_{1,1}G_2(b_{2,2})+ F_2(b_{2,2})\label{eq:p2}
\end{align}
reduces to $2$-flat-hyperslab intersection reporting in $\R^4$
where $G_2, F_2$ are defined similar as $G_1,F_1$.

For the moment, we focus on the case of line-hyperslab intersection reporting 
but the same applies also to $2$-flat-hyperslab intersection reporting in $\R^4$ 
since the polynomials $F_2$ and $G_2$
involved in the definition of Eq.~(\ref{eq:p2}) are quite similar
to Eq.~(\ref{eq:p1}).

Here, we would like to use our techniques from Section~\ref{sec:alg}. 
The general idea is that we will use Corollary~\ref{cor:fullslicing}, 
to reduce the $2(d-1)$-variate polynomials $P_1$ and $P_2$ into bivariate
polynomials on $b_{1,1}$ and $b_{2,2}$. 
Then, the variable $b_{1,1}$ will be our $y$ variable and $b_{2,2}$ will be the
$x$ variable in Section~\ref{sec:alg}, and $G_1$ and $F_1$ here will play the same role as in that section.
We will set 
\begin{align}
        a_{1,1} &= \frac{1+a_{1,2}a_{2,1}}{a_{2,2}} \label{eq:coef}
\end{align}
which will ensure that the leading coefficient of $G_1$ is 1.
This is our normalization step, since we can divide the equations defining
the intersection (and thus polynomials $P_1$ and $P_2$) by any constant. 
Eventually, the resultant of the polynomials $F_1$ and $G_1$ will play an important role. 
Observe that the resultant is 
\begin{align}
    \res(G_1,F_1)=\begin{vmatrix}
            1 & g_0 \\
             f_1 & f_0
         \end{vmatrix}= f_0 - g_0 f_1.    \label{eq:res}
\end{align}

\subsection{Construction of Input Points and Queries}
Now we are ready to describe our input and query construction. 
Assume we have a data structure that uses $S(n)$ space and has the query time
$Q(n) + O(k)$ where $k$ is the output size; for brevity we use $Q=Q(n)$.

We will start with a fixed line and a fixed hyperslab and then build the queries
and inputs very close to these two fixed objects.
However, we require a certain ``general position'' property with respect to these two fixed objects. 

Recall that Eq.~(\ref{eq:p1})  refers to the  condition of
whether a (query) line described by $\ba$ variables intersects
a $(d-2)$-dimensional flat described by the $\bb$ variables (which corresponds to setting the variable $w$ to zero).
Consider a fixed flat and a fixed line.
To avoid future confusion, let $\bA$ and $\bB$ refer to this fixed line and flat.
We require the following.
\begin{itemize}
    \item $\bA$ and $\bB$ must  intersect properly (i.e., the line is
        not contained in the flat).  Observe that it implies that when we consider
        $P_1(\bA,\bb)$ as a polynomial in $\bb$ variables, $\bB$ belongs to the
        zero set of $P_1(\bA,\bb)$.
        Note that this satisfies Assumption~(\ref{eq:as1}).
        
	\item The polynomial $P_1(\bA,\bb)$ (as a polynomial in $\bb$) is irreducible.
		This is true as long as $\bA$ is chosen so that no coefficient in $P_1$ is zero.
		To see this, note that $P_1$ is a polynomial in $\bb$ and any variable $b_{i,j}$ has degree $1$.
		Suppose for the sake of contradiction that $P_1$ is reducible, then the factorization must be of the form
		\[
			P_1(\bA,\bb) = \left( c_{10} + \sum_{i=1}^{d-1}c_{10}b_{1i} \right) \cdot \left( c_{20} + \sum_{i=1}^{d-1}c_{20}b_{2i} \right),
		\]
		for nonzero coefficients $c_{1i}, c_{2i}$.
		Then by Eq.~(\ref{eq:p1}),
		\begin{enumerate}
		\item $a_{0,1} = c_{10}c_{20}$,
		\item $\forall i = 1,\cdots, d-1: a_{0,1}a_{i,2} = c_{10}b_{2i}$,
		\item $\forall i = 1,\cdots, d-1: a_{i,1} = c_{1i}c_{20}$,
		\item $\forall i,j = 1,2,\cdots, d-1: a_{i,1}a_{j,2} - a_{j,1}a_{i,2} = c_{1i}c_{2i}$.
		\end{enumerate}
		However, for these conditions to hold, all coefficients of $P_1$ must be zero, a contradiction.
       
    \item Observe that the irreducibility of  $P_1(\bA,\bb)$ as a polynomial in $\bb$ implies that
        it has only finitely many points where the tangent hyperplane at those points is 
        parallel to some axis.  We assume $\bb$ is not one of those points. 

    \item The irreducibility of  $P_1(\bA,\bb)$ as a polynomial in $\bb$ can be used to satisfy Assumption~(\ref{eq:as2})
        since the corresponding polynomial of the determinant involved in Assumption~(\ref{eq:as2})
        can only have $\Theta(1)$ many common roots with $P_1(\bA,\bb)$.

    \item Finally, since the polynomial $P_1(\bA,\bb)$ is irreducible and since 
        $\res(G_1, F_1)$ is also  of degree 2 in $\bb$ variables, it follows that
        $\res(G_1, F_1)$ is algebraically independent of $P_1(\bA,\bb)$.
        This means that there are only finitely many places where both polynomials are zero,
        meaning, we can additionally assume that Eq.~(\ref{eq:res}) is non-zero
        (when evaluated at $\bB$).
\end{itemize}

Consider two parameters $\varepsilon_p$ and $\varepsilon_q = \varepsilon_p/C$ where
$C$ is a large enough constant and $\varepsilon_p$ is a parameter to be set later. 
Consider the parametric space of the input objects, where the variable
$\bb$ defines a single point. 
In such a space, $\bB$ defines a single point. 
Place an axis-aligned cube $\hR$ of side-length $\varepsilon_p$ centered around $\bB$.
The input slabs are defined by placing a set of $n$ random points inside 
$\hR$.
Each point in $\hR$ defines a $(d-2)$-dimensional flat.
We set $w=\Theta(\frac{Q}{n})$ which in turn defines a ``narrow $(d-1)$-hyperslab''.  

We now define the set of queries. 
Notice that $P_1$ has exactly $2(d-1)$
algebraically independent coefficients;
these are the coefficients of linear terms involved plus $a_{0,1}$;
recall that by Eq.~(\ref{eq:coef}), $a_{1,1}$ was fixed as a function of $a_{1,2}a_{2,1}$ and
$a_{2,2}$ but we still have $a_{0,1}$ as a free parameter.
These $2(d-1)$ coefficients define another parametric space, where $\bA$ 
denotes a single point. 
Place a $2(d-1)$-dimensional hypercube of side length $\varepsilon_q$ and then subdivide it into
a grid where the side-length of every cell is $\tau$. 
Every grid point now defines a different query.
Let $\Q$ be the set of all the queries we have constructed.

Notice that a query defined by a point $\ba \in \Q$ defines a line in the primal space,
but when considered in the parametric space $\hR$, it corresponds to a manifold (zeroes of a degree two
multilinear polynomial)
that includes the set of points that correspond to 
$(d-2)$-dimensional flats that pass through the line in the primal space.
The variable $w$ allows us to turn it to a range reporting problem where
we need to output any $(d-2)$-dimensional flat that passes within $w$ vertical distance
of the query line. 
The following observations and lemmas are the important geometric properties
that we require out of our construction. 

\begin{observation}\label{ob:initdiff}
    For two different queries $\ba_1$, and $\ba_2$, the polynomials $P_1(\ba_1,\bb)$
    and $P_1(\ba_2,\bb)$ differ by at least $\tau$ in at least one of their coefficients.
\end{observation}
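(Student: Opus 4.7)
The plan is to make the parameterization of the query space explicit, after which the observation reduces to an immediate consequence of the grid construction. The $2(d-1)$-dimensional query parameter space is to be identified with the space of the $2(d-1)$ algebraically independent coefficients of $P_1(\ba,\bb)$ viewed as a polynomial in $\bb$. A natural choice for such an independent set is: the constant term $a_{0,1}$, the coefficients of $b_{1,i}$ (which equal $a_{i,1}$) for $i=2,\ldots,d-1$, and the coefficients of $b_{2,i}$ (which equal $a_{0,1}a_{i,2}$) for $i=1,\ldots,d-1$. I would first verify this independence near $\bA$ and check that the remaining coefficients of $P_1$ --- namely the constrained coefficient of $b_{1,1}$, which equals $a_{1,1}$ determined by Eq.~(\ref{eq:coef}), together with all the degree-$2$ cross terms $a_{i,1}a_{j,2}-a_{j,1}a_{i,2}$ --- can be recovered uniquely as rational functions of these $2(d-1)$ free coefficients whenever $a_{0,1}\neq 0$.

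With this parameterization in hand, the hypercube of side length $\varepsilon_q$ is placed in coefficient space centered at $\bA$ and subdivided into a grid of cell side-length $\tau$. By the very definition of a uniform grid, any two distinct grid points must differ in at least one of their $2(d-1)$ coordinates by at least $\tau$. Since each coordinate is, by construction, one of the coefficients of $P_1(\ba,\bb)$ as a polynomial in $\bb$, the polynomials $P_1(\ba_1,\bb)$ and $P_1(\ba_2,\bb)$ must differ by at least $\tau$ in that coefficient, which is exactly the conclusion of the observation.

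The main (and essentially only) subtlety is to ensure that the parameterization remains valid throughout the hypercube, so that every grid point corresponds to a legitimate query line satisfying the general-position conditions listed just before the observation (irreducibility of $P_1(\ba,\bb)$ in $\bb$, absence of axis-parallel tangent hyperplanes at $\bB$, non-vanishing of $\res(G_1,F_1)$, and Assumptions~(\ref{eq:as1}) and~(\ref{eq:as2})). All of these conditions are open in coefficient space, so choosing $\varepsilon_q$ small enough relative to the distance from $\bA$ to any of these forbidden loci ensures that every grid point in the hypercube yields a legitimate query. Once this openness argument is in place, the differing-coefficient claim is nothing more than the grid-spacing property itself, and no further algebraic work is needed.
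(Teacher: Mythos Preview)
Your proposal is correct and matches the paper's own (implicit) reasoning. The paper does not give a separate proof of this observation: it places the grid directly in the space of the $2(d-1)$ algebraically independent coefficients of $P_1(\ba,\bb)$ (the constant term $a_{0,1}$ together with the linear-term coefficients, with $a_{1,1}$ constrained by Eq.~(\ref{eq:coef})), so the statement is immediate from the grid spacing. Your added remarks about recovering the remaining coefficients and about the general-position conditions being open near $\bA$ are useful elaborations that the paper leaves to the reader.
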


\begin{observation}\label{ob:line}
    Consider a line $f$ parallel to an axis. 
    For small enough $\varepsilon_p$,  and any $\ba \in \Q$,
    the function $P_1(\ba,\bb)$ evaluated on the line $f$
    is such that the magnitude of its derivative is
    bounded by $\Omega(1)$.
\end{observation}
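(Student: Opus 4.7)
The plan is to exploit the ``general position'' assumption from the third bullet of the construction: the tangent hyperplane to the zero set of $P_1(\bA,\bb)$ at $\bb = \bB$ is not parallel to any coordinate axis. In the $\bb$ parameter space this is equivalent to saying that $\nabla_\bb P_1(\bA,\bB)$ has no zero coordinate. Thus for every index $(i,j)$ we have
\[
\left.\left|\frac{\partial P_1(\bA,\bb)}{\partial b_{i,j}}\right|\right|_{\bb=\bB} \ge c_0
\]
for some absolute constant $c_0>0$ depending only on the fixed configuration $(\bA,\bB)$.

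First I would observe that each partial derivative $\partial P_1(\ba,\bb)/\partial b_{i,j}$ is itself a polynomial of constant degree in the entries of $\ba$ and $\bb$, hence uniformly continuous on any compact set. Consequently there exist constants $\varepsilon_p^\star,\varepsilon_q^\star>0$ such that whenever $\|\bb - \bB\|_\infty \le \varepsilon_p^\star$ and $\|\ba - \bA\|_\infty \le \varepsilon_q^\star$, we have $|\partial P_1(\ba,\bb)/\partial b_{i,j}| \ge c_0/2$ for every index $(i,j)$.

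Next I would pick $\varepsilon_p \le \varepsilon_p^\star$; since $\varepsilon_q = \varepsilon_p/C$ for the large constant $C$, shrinking $\varepsilon_p$ further if necessary ensures $\varepsilon_q \le \varepsilon_q^\star$ as well. With this choice every $\ba \in \Q$ lies within $\varepsilon_q$ of $\bA$ and every $\bb$ traversed by the line $f$ inside $\hR$ lies within $\varepsilon_p$ of $\bB$. Since $f$ is parallel to the $b_{i,j}$-axis, the derivative of $P_1(\ba,\bb)$ along $f$ coincides with $\partial P_1(\ba,\bb)/\partial b_{i,j}$ restricted to $f$, which by the previous step has magnitude at least $c_0/2 = \Omega(1)$.

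The main (and rather modest) obstacle is the translation step: making precise that ``tangent hyperplane not parallel to any axis'' is the same as ``no coordinate of the gradient vanishes,'' and then threading the constants $c_0$, $\varepsilon_p^\star$, $\varepsilon_q^\star$ so that the final inequality holds for all admissible $\ba$ and along the entire portion of $f$ lying in $\hR$. There is no deeper algebraic content; the whole argument is a continuity/compactness statement enabled by the generic choice of $(\bA,\bB)$.
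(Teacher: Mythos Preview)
Your proposal is correct and follows essentially the same approach as the paper: invoke the general-position assumption that the tangent hyperplane at $\bB$ is not axis-parallel (equivalently, every coordinate of $\nabla_\bb P_1(\bA,\bB)$ is nonzero), then use continuity of the partial derivatives together with the smallness of $\varepsilon_p$ and $\varepsilon_q$ to propagate the lower bound to all $\ba\in\Q$ and all $\bb$ on $f\cap\hR$. Your write-up is in fact more detailed than the paper's terse two-sentence proof, but the underlying idea is identical.
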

\begin{proof}
    Recall that $\bB$ was chosen such that the manifold corresponding to $\bA$ 
    does not have a tangent parallel to any of the axes at point $\bB$ and thus
    the derivate of the function $P_1(\bA,\bB)$ is non-zero at $\bB$.
    The lemma then follows since $\varepsilon_p$ and $\varepsilon_q$ are small enough and
    $P_1(\bA,\bB)$ is a continuous function w.r.t any of its variables.
\end{proof}

\begin{observation}\label{ob:vol}
    The intersection volume of the range defined by a query $\ba$
    and $\hR$ is $\Theta(w\vol(\hR))$ if $C$ in the definition of 
    $\varepsilon_q$ is large enough, for $w \le \varepsilon_p$.
\end{observation}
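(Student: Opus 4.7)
The plan is to evaluate $\vol(\Sigma(\ba)\cap\hR)$ by Fubini's theorem, where (via Corollary~\ref{cor:reduction}) the range is $\Sigma(\ba)=\{\bb : 0\le P_1(\ba,\bb)\le -f(\ba,\bb,w)\}$. Before slicing, I would first show that $|f(\ba,\bb,w)|=\Theta(w)$ uniformly over $\bb\in\hR$ and $\ba\in\Q$. Since $w$ is a factor of $f$, we have $f(\ba,\bb,0)=0$; by Assumption~(\ref{eq:as2}) the partial derivative $\partial f/\partial w$ is a strictly positive constant at $(\bA,\bB,0)$, and since $f$ is a polynomial, this derivative stays within a constant factor of its value at $(\bA,\bB,0)$ once $\varepsilon_p$ is small enough and $C$ is chosen large enough (so that $\varepsilon_q=\varepsilon_p/C$ is small). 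Integrating in $w$ gives $|f|=\Theta(w)$.

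Next, pick an axis direction $b_{k,l}$; by Observation~\ref{ob:line} the magnitude of $\partial P_1(\ba,\cdot)/\partial b_{k,l}$ is bounded below by some positive constant throughout $\hR$, and trivially above by a constant since $P_1$ is a polynomial of bounded coefficients. Foliate $\hR$ by fibers parallel to $b_{k,l}$, each of length $\varepsilon_p$ and indexed by a base of $(2(d-1)-1)$-volume $\varepsilon_p^{2(d-1)-1}$. On each fiber, $P_1(\ba,\cdot)$ is strictly monotone with derivative $\Theta(1)$, so the set $\{0\le P_1(\ba,\cdot)\le -f(\ba,\cdot,w)\}=\{0\le P_1(\ba,\cdot)\le\Theta(w)\}$ is either empty or a single sub-interval of length exactly $\Theta(w)$ (it cannot fill the whole fiber since $w\le\varepsilon_p$ and the variation of $P_1$ across the fiber is $\Theta(\varepsilon_p)$).

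It then remains to show that a constant fraction of fibers have nonempty intersection. Since $P_1(\bA,\bB)=0$, $\bB$ sits at the center of $\hR$, and $P_1$ depends continuously on $\ba$ with bounded gradient, for every $\ba\in\Q$ (within $\varepsilon_q=\varepsilon_p/C$ of $\bA$) the zero set $\{P_1(\ba,\cdot)=0\}$ is a smooth transverse hypersurface passing within distance $O(\varepsilon_p/C)$ of $\bB$. By choosing $C$ large, this zero surface stays in the interior of $\hR$ over a constant fraction of base points, and on each such fiber the slab indeed lies strictly inside and has length $\Theta(w)$. Integrating over the base yields
\[
\vol(\Sigma(\ba)\cap\hR) \;=\; \Theta(w)\cdot\Theta(\varepsilon_p^{2(d-1)-1}) \;=\; \Theta(w\,\vol(\hR)),
\]
since $\varepsilon_p$ is a constant. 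The main subtlety is making all hidden constants uniform in $\ba\in\Q$ and in the choice of base point; this is why $C$ must be taken sufficiently large, so that the $\varepsilon_q$-neighborhood of $\bA$ is small compared with the $\varepsilon_p$-scale on which $P_1$ and $\partial f/\partial w$ are well behaved near $(\bA,\bB)$.
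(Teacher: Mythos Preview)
Your proposal is correct and follows essentially the same approach as the paper: both argue by integrating over axis-parallel lines (fibers) using Observation~\ref{ob:line}, after noting that the zero set of $P_1(\ba,\cdot)$ passes near the center $\bB$ of $\hR$ once $C$ is large. Your version is more thorough---you explicitly establish $|f(\ba,\bb,w)|=\Theta(w)$ from Assumption~(\ref{eq:as2}) and spell out the constant-fraction-of-fibers argument---whereas the paper compresses all of this into a single sentence.
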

\begin{proof}
    Observe that the query manifold defined by $\bA$ passes through the center, $\bB$, of
    $\hR$ by construction.
    Since each coordinate of $\ba$ differs from $\bA$ by at most $\varepsilon_q$,
    it thus follows that by setting $C$ large enough, we can ensure that the distance
    between $\bB$ and $\ba$ is less than $\varepsilon_p/2$. 
    The claim now follows by integrating the volume over vertical lines
    using Observation~\ref{ob:line}.
\end{proof}

\begin{lemma}\label{lem:proj}
    Consider a query $\ba \in \Q$ and let $\rr$ be the range
    that represents $\ba$ in the parametric space defined by $\hR$.
    Consider an interval $\I$ on the $i$-th side of $\hR$, for some $i$.
    Let $\rr_\I$ be the subset of $\rr$ whose projection on the $i$-th side of
    $\hR$ falls inside $\I$. 
    Then, the volume of $\rr_\I$ is $O(\vol(\hR) w |\I|/\varepsilon_p)$.
    
    In addition, for any 2D flat $f$, the area of the intersection of 
    $f$ and $\rr$ is $O(w \varepsilon_p)$.
\end{lemma}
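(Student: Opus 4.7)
The plan is to exploit the slab structure of $\rr$: since $\rr$ is defined by $0 \le P_1(\ba,\bb) \le O(w)$, it is a tubular neighborhood of the zero set of $P_1(\ba,\cdot)$, and Observation~\ref{ob:line} ensures $|\partial P_1(\ba,\bb)/\partial b_j| = \Omega(1)$ for every axis $j$ throughout $\hR$. Both parts then follow from axis-parallel line integration, extending the idea behind Observation~\ref{ob:vol}.

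For the first part, I would apply Fubini by fixing any axis $j \ne i$ and parameterizing axis-$j$ lines by their remaining $2(d-1)-1$ coordinates. The lines passing through the strip $\hR \cap \{b_i \in \I\}$ form a parameter set of measure $|\I|\cdot \varepsilon_p^{2(d-1)-2}$, and on each such line Observation~\ref{ob:line} guarantees that the segment satisfying $0 \le P_1(\ba,\bb) \le O(w)$ has $1$-dimensional length $O(w)$ (under the same normalization as in Observation~\ref{ob:vol}). Multiplying these estimates yields
\[
\vol(\rr_\I) = O\!\left(\vol(\hR)\,w\,|\I|/\varepsilon_p\right),
\]
as claimed.

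For the second part, I would view $f \cap \rr \cap \hR$ as a planar strip in $f$ bounded by two level curves of $P_1(\ba,\cdot)|_f$ at heights $0$ and $O(w)$. Its length along these curves is bounded by $\mathrm{diam}(f\cap \hR) = O(\varepsilon_p)$, while its width transverse to them within $f$ is $O(w)/|\nabla (P_1|_f)| = O(w)$, since Observation~\ref{ob:line} forces the restriction gradient to have magnitude $\Omega(1)$ for any $f$ in general position. Multiplying width by length yields area $O(w\,\varepsilon_p)$.

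The main subtlety is handling 2D flats $f$ whose tangent space is nearly orthogonal to $\nabla_\bb P_1$, for which $P_1|_f$ is nearly constant and the strip width could blow up. In the construction where this lemma is subsequently applied, the relevant 2D flats arise from axis-aligned sub-objects of $\hR$ and never approach tangency to the slab boundary, thanks again to the axis-direction lower bound of Observation~\ref{ob:line}; otherwise a separate case analysis ruling such flats out would be required to keep the bound tight.
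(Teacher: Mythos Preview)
Your proposal is correct and matches the paper's own proof, which is the one-liner ``Both claims follow through Observation~\ref{ob:line} by integrating the corresponding volumes over lines parallel to axes.'' You have simply unpacked that sentence: your Fubini argument for the first part is exactly the intended integration, and your length-times-width estimate for the 2D flat is the same idea specialized to two dimensions; the general-position caveat you raise for non-axis-aligned flats is real but, as you note, irrelevant in the paper's application since the slices are axis-aligned.
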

\begin{proof}
    Both claims follow through Observation~\ref{ob:line} by integrating 
    the corresponding volumes over lines parallel to axes.
\end{proof}
\subsection{Using the Framework}

Observe that by the above Observation~\ref{ob:vol}, setting $w=\Theta(\frac{Q}{n})$  satisfies
Condition~\ref{cond:cond1} of the lower bound framework in Theorem~\ref{thm:rrfw}.

Satisfying Condition~\ref{cond:cond2} requires a bit more work however.
To do that, consider two queries defined by points $\ba_1$ and $\ba_2$.
Let $\rr_1$ and $\rr_2$ be the two corresponding ranges
in the parametric space of $\hR$.

To satisfy Condition~\ref{cond:cond2}, 
assume for contradiction that the volume of $\rr_1 \cap \rr_2$ 
is large, i.e.,
larger than $\vol(\hR)/(Q\psi)$ where $\psi=2^{\sqrt{\log n}}$.
We now combine Observation~\ref{ob:initdiff}, and Corollary~\ref{cor:fullslicing} with
parameter $\vartheta$ set to $\varepsilon_0\frac{\varepsilon_p}{Q(n) \psi}$ where
$\varepsilon_0$ is a small enough constant and where 
$X_1$ represents $b_{1,1}$, $X_2$ represents $b_{2,2}$ and the remaining indeterminates
represent the rest of variables in $\bb$;
note that the value of $d$ in Corollary~\ref{cor:fullslicing}
is $\beta=2(d-1)$ and $\U=\binom{2+1}{2}=3$.
Observe that each interval $\I_i$ determined by Corollary~\ref{cor:fullslicing} defines
a slab parallel to the $i$-th axis in $\hR$; let $\hRb$ be the union of these slabs.
By Lemma~\ref{lem:proj}, and choice of small enough $\varepsilon_0$, 
a positive fraction of the intersection volume of
$\rr_1$ and $\rr_2$ must lie outside $\hRb$.
In addition, Corollary~\ref{cor:fullslicing} allows us to pick some fixed values 
for all variables in $\bb$, except for $b_{1,1}$ and $b_{2,2}$ with the property 
the final polynomials 
$H_1$ and $H_2$ (on indeterminates
$b_{1,1}$ and $b_{2,2}$) that we obtain have  the property that
they have at least one coefficient which 
differs by at least 
\begin{align}
    \Omega\left( \tau \left(\varepsilon_0\frac{\varepsilon_p}{Q(n) \psi}  \right)^{3(\beta-2)} \right)\label{eq:finaldiff}
\end{align}
between them; we call this operation of plugging values for all $\bb$ except for
$b_{1,1}$ and $b_{2,2}$ \textit{slicing}.
After slicing, we are reduced to the bivariate case; consider
the set of points on which both polynomials $H_1$ and $H_2$ have value $O(w)$.
If the 2D area of such points is $o(\varepsilon_p / (Q\psi)$, 
    then we call this a \textit{good} slice, otherwise, it is a \textit{bad} slice.
    By Lemma~\ref{lem:proj}, we can observe that there must be bad slices since
    if all the slices are good, by integration of the area of
    the intersection of $\rr_1$ and $\rr_2$ over all the remaining
    variables in $\bb$, will yield that the volume of the intersection of
    $\rr_1$ and $\rr_2$ smaller is $o(\vol(\hR)/(Q\psi))$ which is a contradiction.

We now show that we can arrive at a contradiction, assuming the existence of a bad slice.
Given a bad slice, and any constant $\ell$, we can find
$\ell$ points $(x_1, y_1), \ldots, (x_\ell,y_\ell)$ such that 
$|x_{k_1}-x_{k_2}| \ge \Omega(\varepsilon_p/(Q\psi))$ for all $1\le k_1< k_2\le \ell$ and
that $H_1(x_k,y_k), H_2(x_k,y_k) = O(w)$ for all $k\in\{1,2\cdots,\ell\}$.
Observe that $H_i(x,y)$ has only monomials $y$, $x$, $xy$ and a constant term.
The critical observation here is that the coefficient of the monomial $xy$ is always
1 since the coefficient of the monomial $b_{1,1}b_{2,2}$ was 1 and there was no monomial of
degree three in $P_1$, meaning, after slicing this coefficient will not change. 
We pick $\ell=3$ and thus we tweak all the three other coefficients of
$H_1$.
Tweaking $H_1$ such that $\tilde{H}_1(x_k,y_k) = H_2(x_k,y_k)$ corresponds to solving a 
linear system of equations that come from evalutions of monomials $X$, $Y$, and a constant term
at points $(x_k, y_k)$. 
We can thus use
Lemma~\ref{lem:detX} with $\Delta_1 = \Delta_G =1$, $\lambda = \Omega(\varepsilon_p/(Q\psi))$.
Observe that $\res(G,F)$ here is a constant by the properties of our construction. 
Also observe that by Lemma~\ref{lem:detX}, the magnitude of the determinant of matrix $A$ defined in
Lemma~\ref{lem:detX} is 
\[
\Omega \left( \left(  \varepsilon_p/(Q\psi)) \right)^9 \right).
\]
By the same argument in~\cite{ac22}, this means that the tweaking operation can be done
such that each coefficient of $H_1$ is changed by at most
\begin{align}
O\left( \left(  \varepsilon_p/(Q\psi)) \right)^{-9}w \right)\label{eq:finalt}.
\end{align}
We observe that after tweaking,
$\tilde{H}_1$ and $H_2$ must coincide since  by Lemma~\ref{lem:detX}, the determinant of
the relevant monomials is non-zero and thus there's a unique polynomial that passes through 
points $(x_1, y_1), \cdots, (x_\ell, y_\ell)$.
Finally, to get a contradiction, we simply need to ensure that 
Eq.~(\ref{eq:finalt}) is asymptotically smaller than Eq.~(\ref{eq:finaldiff}).
This yields a bound for the value of $\tau$,
\begin{align}
    \tau = \Theta\left(w (Q\psi)^{3(\beta-2)+9} \right) =  \Theta\left(w (Q\psi)^{3\beta+3} \right)
\end{align}
where we have assumed that $\varepsilon_p$, and $\varepsilon_0$ are small enough constants
that have been absorbed in the $\Omega(\cdot)$ notation.
Thus, this choice of $\tau$ will make sure that Condition~\ref{cond:cond2} of the framework is also 
satisfied. 
It remains to calculate the number of queries that have been generated.
Observe that $\tau$ was the side-length of a small enough grid around the point $\bA$
in a $\beta$-dimensional space.
Thus, the number of queries we generated is
\begin{align}
    m=\domega\left( \left(  \frac{1}{\tau} \right)^{\beta} \right) = \domega\left( \frac{n^{\beta}}{Q^{\beta(3\beta+4)}}\right).
\end{align}
Applying Theorem~\ref{thm:rrfw} yields a space lower bound of 
\begin{align}
  S(n) =  \domega(mQ) =   \domega\left( \frac{n^{2(d-1)}}{Q^{4(3d-1)(d-1)-1}}\right)
\end{align}
for line-hyperslab intersection reporting since $\beta=2(d-1)$.
One can verify that the same argument works for triangle-triangle intersection reporting in $\R^4$, since
$P_2$ is also a multilinear polynomial of degree two.
In this case, $\beta=6$ which yields 
a space lower bound of
\begin{align}
  S(n) =     \domega\left( \frac{n^{6}}{Q^{125}}\right).
\end{align}

To sum up, we obtain the following results:

	\begin{theorem}
	\label{thm:line-hyperslab}
		Any data structure that solves line-hyperslab intersection reporting in $\R^d$
		must satisfy a space-time tradeoff of 
		$S(n)=\domega\left(\frac{n^{2(d-1)}}{Q(n)^{(4(3d-1)(d-1)-1}}\right)$.
	\end{theorem}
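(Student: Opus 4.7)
The plan is to invoke the pointer-machine framework of Theorem~\ref{thm:rrfw} applied to the polynomial slab reporting problem that, by Corollary~\ref{cor:reduction}, is equivalent to line-hyperslab intersection reporting. The relevant polynomial is $P_1(\ba,\bb)$ from Eq.~(\ref{eq:p1}), which is multilinear of degree two in $\bb$. First I would fix a ``generic'' query line $\bA$ and input flat $\bB$ that intersect properly so that $P_1(\bA,\bb)$ is irreducible in $\bb$ (achievable by ensuring none of its coefficients vanish), its zero set has no axis-parallel tangent at $\bB$, Assumption~(\ref{eq:as2}) holds, and the resultant in Eq.~(\ref{eq:res}) is nonzero at $\bB$; I would also normalize via Eq.~(\ref{eq:coef}) so that $G_1$ is monic in $b_{2,2}$. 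Then I would place an axis-aligned cube $\hR$ of side $\varepsilon_p$ centered at $\bB$ in $\bb$-space and drop $n$ random points inside as inputs, each extruded by width $w=\Theta(Q/n)$ via Corollary~\ref{cor:reduction}; in the $\beta=2(d-1)$-dimensional $\ba$-space I would place a cube of side $\varepsilon_q=\varepsilon_p/C$ centered at $\bA$ and take a regular grid of side $\tau$ to define $m=(\varepsilon_q/\tau)^\beta$ queries. Condition~\ref{cond:cond1} of Theorem~\ref{thm:rrfw} then follows from a fibrewise volume estimate: Observation~\ref{ob:line} ensures the axial derivative of $P_1$ is $\Omega(1)$, so each query's range meets $\hR$ in volume $\Theta(w\vol(\hR))$.

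The bulk of the work lies in verifying Condition~\ref{cond:cond2} and determining how small $\tau$ may be. Arguing by contradiction, suppose two query ranges $\rr_1,\rr_2$ intersect in volume at least $\vol(\hR)/(n2^{\sqrt{\log n}})$. By Observation~\ref{ob:initdiff}, the coefficients of $P_1(\ba_1,\bb)$ and $P_1(\ba_2,\bb)$ differ by at least $\tau$ in some monomial. I would invoke Corollary~\ref{cor:fullslicing} with $\vartheta=\varepsilon_0\varepsilon_p/(Q\psi)$, $\psi=2^{\sqrt{\log n}}$, excising thin axis-parallel slabs from $\hR$; Lemma~\ref{lem:proj} ensures these slabs carry only a small fraction of $\rr_1\cap\rr_2$. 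Averaging the remaining volume over the $\beta-2$ coordinates of $\bb$ other than $b_{1,1}$ and $b_{2,2}$ yields a \emph{bad slice}: a choice of values for those coordinates whose induced bivariate polynomials $H_1,H_2$ both have magnitude $O(w)$ on a 2D region of area $\Omega(\varepsilon_p/(Q\psi))$, while retaining a coefficient gap of at least $\tau\vartheta^{3(\beta-2)}$. From this region I would pick $\ell=3$ sample points $(x_k,y_k)$ with pairwise $x$-separation $\Omega(\varepsilon_p/(Q\psi))$.

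The crux is then to tweak $H_1$ so that it interpolates $H_2$ at these three points. Because $P_1$ has no monomial of degree three and the coefficient of $b_{1,1}b_{2,2}$ is pinned to $1$, after slicing the coefficient of $xy$ in both $H_i$ is still $1$; only the three coefficients of $y$, $x$, and the constant term are free. Thus one must lower-bound the determinant of the $3\times 3$ evaluation matrix on these monomials, which is precisely what Lemma~\ref{lem:detX} provides with $\Delta_1=\Delta_G=1$: since $\res(G_1,F_1)\neq 0$ by construction, the Sylvester-matrix manipulation reduces it to a Vandermonde matrix, yielding magnitude $\Omega((\varepsilon_p/(Q\psi))^9)$. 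Solving the resulting linear system shows each free coefficient of $H_1$ needs to be perturbed by only $O((Q\psi/\varepsilon_p)^9 w)$; requiring this perturbation to be asymptotically smaller than $\tau\vartheta^{3(\beta-2)}$ forces $\tau=\Omega(w(Q\psi)^{3\beta+3})$. Plugging $w=\Theta(Q/n)$, $m=(\varepsilon_q/\tau)^\beta$ and $\beta=2(d-1)$ into the bound $S(n)=\domega(mQ)$ yields the claimed tradeoff.

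The main obstacle, which motivates the entire algebraic machinery of Section~\ref{sec:alg}, is precisely this tweaking step. Previous techniques of~\cite{ac21,ac22} either required all polynomial coefficients to be independently perturbable or assumed a normal form $X_1=X_2^\Delta+P$ with free low-order coefficients; here the coefficient of $b_{1,1}b_{2,2}$ is structurally fixed to $1$ and cannot be tuned, so a naive Vandermonde argument on the evaluation matrix would fail to have nonzero determinant in general. Lemma~\ref{lem:detX}, via resultants and Sylvester matrices, is the new tool that recovers a Vandermonde-type lower bound on the tweaking determinant in this constrained setting, and it is precisely the reason the same proof extends verbatim to triangle-triangle intersection in $\R^4$ where $P_2$ from Eq.~(\ref{eq:p2}) shares the same multilinear degree-two structure.
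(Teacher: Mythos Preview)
Your proposal is correct and follows essentially the same approach as the paper's own proof: the same generic choice of $\bA,\bB$, the same normalization via Eq.~(\ref{eq:coef}), the same cube-and-grid construction with $w=\Theta(Q/n)$ and $\beta=2(d-1)$, the same use of Corollary~\ref{cor:fullslicing} with $\vartheta=\varepsilon_0\varepsilon_p/(Q\psi)$ to reduce to a bivariate bad slice, and the same invocation of Lemma~\ref{lem:detX} with $\Delta_1=\Delta_G=1$ to tweak the three free coefficients, yielding $\tau=\Theta\bigl(w(Q\psi)^{3\beta+3}\bigr)$ and the stated bound. Your identification of the pinned $b_{1,1}b_{2,2}$-coefficient as the obstacle that forces the resultant/Sylvester machinery of Lemma~\ref{lem:detX} is exactly the point the paper makes.
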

	
	\begin{theorem}
	\label{thm:tri-tri}
		Any data structure that solves triangle-triangle intersection reporting in $\R^4$
		must satisfy a space-time tradeoff of 
		$S(n)=\domega\left(\frac{n^{6}}{Q(n)^{125}}\right)$.
	\end{theorem}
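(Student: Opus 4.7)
The plan is to observe that triangle-triangle intersection reporting in $\R^4$ corresponds, via Corollary~\ref{cor:reduction} applied with $t=2$ and $d=4$, to polynomial slab reporting for the polynomial $P_2(\ba,\bb)$ defined in Eq.~(\ref{eq:p2}), and then to rerun the entire proof of Theorem~\ref{thm:line-hyperslab} verbatim with $P_2$ in place of $P_1$. The key algebraic observation that enables this is that $P_2$ is, like $P_1$, a multilinear polynomial of degree $2$ in $\bb$ with no degree-$3$ or higher monomials, and admits the same decomposition $P_2(\ba,\bb) = b_{1,1} G_2(b_{2,2}) + F_2(b_{2,2})$ where $G_2$ and $F_2$ are affine in $b_{2,2}$.

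First, I would compute from Eq.~(\ref{eq:p2}) the coefficient of the monomial $b_{1,1}b_{2,2}$, which is $a_{1,1}a_{2,2} - a_{1,2}a_{2,1}$, and normalize by setting $a_{1,1} = (1+a_{1,2}a_{2,1})/a_{2,2}$ exactly as in Eq.~(\ref{eq:coef}) so that the leading coefficient of $G_2$ equals $1$. This accounts for one of the parameters in $\ba$, leaving $\beta = 6$ algebraically independent coefficients, namely $a_{0,1}, a_{1,2}, a_{1,3}, a_{2,1}, a_{2,2}, a_{2,3}$. Then I would choose reference objects $(\bA,\bB)$ satisfying the same five general-position conditions used in the line-hyperslab case: proper intersection so Assumption~(\ref{eq:as1}) holds; irreducibility of $P_2(\bA,\bb)$ as a polynomial in $\bb$, which follows from requiring every coefficient of $P_2(\bA,\cdot)$ to be nonzero and which, by the same case analysis as in Theorem~\ref{thm:line-hyperslab}, would otherwise force a trivial product factorization; no axis-parallel tangent at $\bB$ to the zero set of $P_2(\bA,\bb)$; validity of Assumption~(\ref{eq:as2}); and non-vanishing of $\res(G_2,F_2)$ at $\bB$, which is generic because $\res(G_2,F_2)$ is itself a degree-$2$ polynomial in $\bb$ algebraically independent of $P_2(\bA,\bb)$.

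Once the reference pair is fixed, the rest of the argument is structurally identical. I would place the $n$ input slabs as random points inside an axis-aligned cube $\hR$ of side $\varepsilon_p$ around $\bB$, thickened by $w = \Theta(Q/n)$, and define queries on a $\tau$-spaced grid inside a $6$-dimensional hypercube of side $\varepsilon_q$ around $\bA$. Observations~\ref{ob:initdiff}--\ref{ob:vol} and Lemma~\ref{lem:proj} carry over since their proofs only invoke the continuity of $P_2$ in the $\bb$ variables and the general position of $\bB$. Condition~\ref{cond:cond1} of Theorem~\ref{thm:rrfw} follows immediately from Observation~\ref{ob:vol}. For Condition~\ref{cond:cond2}, I would assume for contradiction that two ranges $\rr_1,\rr_2$ have large intersection volume, apply Corollary~\ref{cor:fullslicing} with $\beta = 6$, $\U = 3$, and $\vartheta = \Theta(\varepsilon_p/(Q\psi))$ to obtain a bad bivariate slice in $b_{1,1}$ and $b_{2,2}$, and then invoke Lemma~\ref{lem:detX} with $\Delta_1 = \Delta_G = 1$ and $\ell = 3$. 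Since the coefficient of $b_{1,1}b_{2,2}$ is pinned to $1$ by normalization and cannot be altered by slicing, the tweak applies to exactly the three remaining free coefficients of $H_1$, yielding a contradiction with Observation~\ref{ob:initdiff} by the same calculation of $\tau$ as in the line-hyperslab case.

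The main obstacle I expect is verifying the irreducibility of $P_2(\bA,\bb)$ and the non-vanishing of $\res(G_2,F_2)$ at $\bB$, since Eq.~(\ref{eq:p2}) is combinatorially more complex than Eq.~(\ref{eq:p1}) and one must enumerate its degree-$2$ monomials carefully to rule out any nontrivial bilinear factorization; everything else reduces to substituting $\beta = 6$ into the general bound $S(n) = \domega(mQ)$ with $m = \domega((1/\tau)^{\beta})$ derived in the proof of Theorem~\ref{thm:line-hyperslab}, which produces the stated $\domega(n^6/Q(n)^{125})$ lower bound.
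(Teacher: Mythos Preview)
Your proposal is correct and follows essentially the same route as the paper. The paper's own proof of Theorem~\ref{thm:tri-tri} consists of the single remark that ``the same argument works for triangle-triangle intersection reporting in $\R^4$, since $P_2$ is also a multilinear polynomial of degree two'' and then substitutes $\beta=6$; your write-up is a faithful and more detailed expansion of exactly that, including the normalization of the $b_{1,1}b_{2,2}$ coefficient, the five general-position requirements, and the slicing/tweaking pipeline via Corollary~\ref{cor:fullslicing} and Lemma~\ref{lem:detX}.
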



\section{Conclusion and Open Problems}

We study line-hyperslab intersecting reporting in $\R^d$
and triangle-triangle intersecting reporting in $\R^4$.
We show that any data structure with $n^{o(1)}+O(k)$ query time
must use space $\domega(n^{2(d-1)})$ and $\domega(n^6)$
for the two problems respectively.
This matches the classical upper bounds for the small $n^{o(1)}$ query time case
for the two problems and answer an open problem for lower bounds
asked by Ezra and Sharir~\cite{es22i}.
Along the way, we generalize and develop the lower bound technique used in~\cite{ac21,ac22}.

The major open problem is how to show a lower bound for
general intersection reporting between objects of $t$ and $(d-t)$ dimensions
or for flat semialgebraic objects as studied recently in~\cite{aaeks22}.
Many of our techniques work, however, one big challenge is that 
after applying Corollary~\ref{cor:fullslicing},
the leading coefficient changes and thus 
we can no longer guarantee big gaps between coefficients.

\bibliography{reference}{}

\begin{thebibliography}{10}

\bibitem{a12}
P.~Afshani.
\newblock Improved pointer machine and {I}/{O} lower bounds for simplex range
  reporting and related problems.
\newblock In {\em Proceedings of the Twenty-Eighth Annual Symposium on
  Computational Geometry}, SoCG '12, page 339–346, New York, NY, USA, 2012.
  Association for Computing Machinery.

\bibitem{ac21}
P.~Afshani and P.~Cheng.
\newblock Lower bounds for semialgebraic range searching and stabbing problems.
\newblock In {\em 37th {I}nternational {S}ymposium on {C}omputational
  {G}eometry}, volume 189 of {\em LIPIcs. Leibniz Int. Proc. Inform.}, pages
  Art. No. 8, 15. Schloss Dagstuhl. Leibniz-Zent. Inform., Wadern, 2021.

\bibitem{ac22}
P.~Afshani and P.~Cheng.
\newblock On semialgebraic range reporting.
\newblock In {\em 38th {I}nternational {S}ymposium on {C}omputational
  {G}eometry}, volume 224 of {\em LIPIcs. Leibniz Int. Proc. Inform.}, pages
  Paper No. 3, 14. Schloss Dagstuhl. Leibniz-Zent. Inform., Wadern, 2022.

\bibitem{a17}
P.~K. Agarwal.
\newblock Simplex range searching and its variants: a review.
\newblock In {\em A journey through discrete mathematics}, pages 1--30.
  Springer, Cham, 2017.

\bibitem{aaeks22}
P.~K. Agarwal, B.~Aronov, E.~Ezra, M.~J. Katz, and M.~Sharir.
\newblock Intersection queries for flat semi-algebraic objects in three
  dimensions and related problems.
\newblock In {\em 38th {I}nternational {S}ymposium on {C}omputational
  {G}eometry}, volume 224 of {\em LIPIcs. Leibniz Int. Proc. Inform.}, pages
  Paper No. 4, 14. Schloss Dagstuhl. Leibniz-Zent. Inform., Wadern, 2022.

\bibitem{aaez21}
P.~K. Agarwal, B.~Aronov, E.~Ezra, and J.~Zahl.
\newblock Efficient algorithm for generalized polynomial partitioning and its
  applications.
\newblock {\em SIAM J. Comput.}, 50(2):760--787, 2021.

\bibitem{am94}
P.~K. Agarwal and J.~Matousek.
\newblock On range searching with semialgebraic sets.
\newblock {\em Discret. Comput. Geom.}, 11:393--418, 1994.

\bibitem{am93}
P.~K. Agarwal and J.~Matou\v{s}ek.
\newblock Ray shooting and parametric search.
\newblock {\em SIAM J. Comput.}, 22(4):794--806, 1993.

\bibitem{ams13}
P.~K. Agarwal, J.~Matou\v{s}ek, and M.~Sharir.
\newblock On range searching with semialgebraic sets. {II}.
\newblock {\em SIAM J. Comput.}, 42(6):2039--2062, 2013.

\bibitem{as96}
P.~K. Agarwal and M.~Sharir.
\newblock Ray shooting amidst convex polyhedra and polyhedral terrains in three
  dimensions.
\newblock {\em SIAM J. Comput.}, 25(1):100--116, 1996.

\bibitem{adg08}
B.~Aronov, M.~de~Berg, and C.~Gray.
\newblock Ray shooting and intersection searching amidst fat convex polyhedra
  in 3-space.
\newblock {\em Comput. Geom.}, 41(1-2):68--76, 2008.

\bibitem{c12}
T.~M. Chan.
\newblock Optimal partition trees.
\newblock {\em Discrete Comput. Geom.}, 47(4):661--690, 2012.

\bibitem{c89}
B.~Chazelle.
\newblock Lower bounds on the complexity of polytope range searching.
\newblock {\em J. Amer. Math. Soc.}, 2(4):637--666, 1989.

\bibitem{c90r}
B.~Chazelle.
\newblock Lower bounds for orthogonal range searching. {I}. {T}he reporting
  case.
\newblock {\em J. Assoc. Comput. Mach.}, 37(2):200--212, 1990.

\bibitem{Chazelle.cutting}
B.~Chazelle.
\newblock Cutting hyperplanes for divide-and-conquer.
\newblock {\em Discrete Comput. Geom.}, 9(2):145–158, Dec. 1993.

\bibitem{cr96}
B.~Chazelle and B.~Rosenberg.
\newblock Simplex range reporting on a pointer machine.
\newblock {\em Comput. Geom.}, 5(5):237--247, 1996.

\bibitem{dg08}
M.~de~Berg and C.~Gray.
\newblock Vertical ray shooting and computing depth orders for fat objects.
\newblock {\em SIAM J. Comput.}, 38(1):257--275, 2008.

\bibitem{dhosv94}
M.~de~Berg, D.~Halperin, M.~Overmars, J.~Snoeyink, and M.~van Kreveld.
\newblock Efficient ray shooting and hidden surface removal.
\newblock {\em Algorithmica}, 12(1):30--53, 1994.

\bibitem{es22ii}
E.~Ezra and M.~Sharir.
\newblock Intersection searching amid tetrahedra in four dimensions.
\newblock {\em CoRR}, abs/2208.06703, 2022.

\bibitem{es22i}
E.~Ezra and M.~Sharir.
\newblock On ray shooting for triangles in 3-space and related problems.
\newblock {\em SIAM J. Comput.}, 51(4):1065--1095, 2022.

\bibitem{g15}
L.~Guth.
\newblock Polynomial partitioning for a set of varieties.
\newblock {\em Math. Proc. Cambridge Philos. Soc.}, 159(3):459--469, 2015.

\bibitem{gk15}
L.~Guth and N.~H. Katz.
\newblock On the {E}rdős distinct distances problem in the plane.
\newblock {\em Ann. of Math. (2)}, 181(1):155--190, 2015.

\bibitem{matouvsek1993range}
J.~Matou\v{s}ek.
\newblock Range searching with efficient hierarchical cuttings.
\newblock {\em Discrete Comput. Geom.}, 10(2):157--182, 1993.

\bibitem{mp15}
J.~Matou\v{s}ek and Z.~Pat\'{a}kov\'{a}.
\newblock Multilevel polynomial partitions and simplified range searching.
\newblock {\em Discrete Comput. Geom.}, 54(1):22--41, 2015.

\bibitem{ms93}
J.~Matou\v{s}ek and O.~Schwarzkopf.
\newblock On ray shooting in convex polytopes.
\newblock {\em Discrete Comput. Geom.}, 10(2):215--232, 1993.

\bibitem{p90}
M.~Pellegrini.
\newblock Stabbing and ray shooting in 3 dimensional space.
\newblock In R.~Seidel, editor, {\em Proceedings of the Sixth Annual Symposium
  on Computational Geometry, Berkeley, CA, USA, June 6-8, 1990}, pages
  177--186. {ACM}, 1990.

\bibitem{p93}
M.~Pellegrini.
\newblock Ray shooting on triangles in {$3$}-space.
\newblock {\em Algorithmica}, 9(5):471--494, 1993.

\bibitem{p17}
M.~Pellegrini.
\newblock Ray shooting and lines in space.
\newblock In {\em Handbook of discrete and computational geometry (3rd
  Edition)}, CRC Press Ser. Discrete Math. Appl., pages 1093--1112. CRC, Boca
  Raton, FL, 2017.

\bibitem{r99}
E.~A. Ramos.
\newblock On range reporting, ray shooting and {$k$}-level construction.
\newblock In {\em Proceedings of the {F}ifteenth {A}nnual {S}ymposium on
  {C}omputational {G}eometry ({M}iami {B}each, {FL}, 1999)}, pages 390--399.
  ACM, New York, 1999.

\bibitem{ss05}
M.~Sharir and H.~Shaul.
\newblock Ray shooting and stone throwing with near-linear storage.
\newblock {\em Comput. Geom.}, 30(3):239--252, 2005.

\end{thebibliography}
\bibliographystyle{abbrv}

\begin{appendices}

\section{Proof of Lemma~\ref{lem:gen1d}}
\label{sec:proof-gen1d}
\genbase*

	\begin{proof}
	The proof is by contradiction.
	Assume for the sake of contradiction that $|\I|=\omega((w/\eta)^{1/\U})$.
	We pick $\D+1$ points $(u_1,v_1),\cdots,(u_{\D+1},v_{\D+1})$ from $y=P(x)$ 
	for $u_1,\cdots,u_{\D+1}\in\I$ such that $|u_{k_1}-u_{k_2}|=\omega((w/\eta)^{1/\U})$ for $k_1\neq k_2$.
	Let $(u_1,v_1+\xi_1),\cdots,(u_{\D+1},v_{\D+1}+\xi_{\D+1})$ be $\D+1$ points on $y=Q(x)$.
	By definition $|\xi_k|\le w$ for all $k=1,2,\cdots,\D+1$.
	 We would like to tweak coefficient $a_i$ of $P(x)$ by $\delta_i$ for $i=0,1,\cdots,\D$
	 to obtain a polynomial $P'(x)$ such that 
	 $y=P'(x)$ and $y=Q(x)$ agree on $x=u_1,\cdots,u_{\D+1}$.
     Observe that to do that, for each $k$, we would like to have 
	 \[
	 	v_k+\xi_k=\sum_{i=0}^{\D}(a_i+\delta_i)u_k^i 
		\implies	\xi_k=\sum_{i=0}^{\D}\delta_i u_k^i
	\]
    where the last follows from $v_k = \sum_{i=0}^{\D}a_iu_k^i$.
    Thus, to perform the tweaking, 
	each $\delta_i$ should satisfy the following system
	\[
		\begin{bmatrix}
		1 			&		x_1 			&		\cdots 		&		x_1^\D 			\\
		\vdots 	&		\vdots		&		\ddots		&		\vdots			\\
		1 			&		x_{\D+1}		&		\cdots 		&		x_{\D+1}^\D 	\\
		\end{bmatrix}
		\cdot
		\begin{bmatrix}
		\delta_0		\\
		\vdots		\\
		\delta_{\D}	\\
		\end{bmatrix}
		=
		\begin{bmatrix}
		\xi_1			\\
		\vdots		\\
		\xi_{\D+1}	\\
		\end{bmatrix},
	\]
	or $A\cdot\vdelta = \vxi$.
	Note that $A$ is a Vandermonde matrix.
	So by Theorem~\ref{thm:detvand},
    $\det(A)=\Omega(|\I|^{\U})$, since $|x_{k_1} - x_{k_2} | = \Omega(|\I|)$. 
    This shows that there exists a unique solution for $\delta_i$ and thus
    we can perform the tweaking.
    In addition, it follows that $|\delta_i|  = O( \frac{\xi}{\det(A)})= O( \frac{w}{\det(A)})$. 
    Now observe that if we assume $\I = \omega((w/\eta)^{1/\U})$, it follows
    that $|\delta_i|=o(\eta)$.
	Since $P',Q$ have $\D+1$ points in common,
	they must be equivalent. 
	This mean $|a_i-b_i|=o(\eta)$ for all $0\le i\le \D$,
	a contradiction.
	\end{proof}

\section{Proof of Lemma~\ref{lem:slicing}}
\label{sec:proof-slicing}
\slicing*

	\begin{proof}
		Note that
		\[
			P_1=\sum_{\vi\in I} A_\vi X^\vi=\sum_{\vj\in I_{:d-1}}h_\vj(X_d)X_{:d-1}^\vj,
		\]
		and
		\[
			P_2=\sum_{i\in I} B_\vi X^\vi=\sum_{\vj\in I_{:d-1}}\hat{h}_\vj(X_d)X_{:d-1}^\vj,
		\]
		where $I_{:d-1}$ stands for the set of $(d-1)$-tuples formed by taking the $(d-1)$ 
		entries of every $d$-tuple in $I$ and $X_{:d-1}=(X_1,\cdots,X_{d-1})$,
		and the coefficients of $X_{:d-1}$ are polynomials in $X_d$ of form
		\[
			h_\vj(X_d) = \sum_{i=0}^{\D} A_{\vj\oplus i}X_d^i,
		\]
		and
		\[
			\hat{h}_\vj(X_d) = \sum_{i=0}^{\D} B_{\vj\oplus i}X_d^i,
		\]
		where $\vj\oplus i$ stands for a $d$-tuple formed by appending $i$ to $\vj$.
		Since $P_1\not\equiv P_2$, there must exists one $\vj$ such that $h_{\vj}(X_d)-\hat{h}_{\vj}(X_d)\not\equiv 0$.
		Then by Lemma~\ref{lem:gen1d}, the interval length for $X_d$ in which 
		$|h_{\vj}(X_d)-\hat{h}_{\vj}(X_d)|\le\eta_{d-1}$
		is upper bounded by $O(\eta_{d-1}/\eta_d)^{1/\U})$.	
		By a union bound over all monomials, the lemma follows.	
	\end{proof}

\end{appendices}

\end{document}